\theoremstyle{plain}
\newtheorem{theorem}{Theorem}
\newtheorem{proposition}[theorem]{Proposition}
\newtheorem{lemma}[theorem]{Lemma}
\newtheorem{corollary}[theorem]{Corollary}
\theoremstyle{definition}
\newtheorem{definition}[theorem]{Definition}
\newtheorem{example}[theorem]{Example}
\newtheorem{remark}[theorem]{Remark}
\newcommand{\model}{\mathcal{M}}
\newcommand{\amodel}{\mathbf{A}}
\newcommand{\tella}{\mathsf{tell}}
\newcommand{\showa}{\mathsf{show}}
\newcommand{\tup}[1]{\langle #1 \rangle}
\newcommand{\set}[1]{\{#1\}}
\newcommand{\ra}{\rightarrow}
\newcommand{\iffdef}{\Leftrightarrow}
\newcommand{\lra}{\leftrightarrow}
\newcommand{\prop}{\mathsf{Prop}}
\newcommand{\agt}{\mathsf{Agt}}
\newcommand{\obsym}{\mathsf{Obs}}
\newcommand{\obs}[2]{\textsf{o}_{#1}{#2}}
\newcommand{\obsp}{\obs{a}{p}}
\newcommand{\obsnp}{\obs{a}{\neg p}}
\newcommand{\obsbp}{\obs{b}{p}}
\newcommand{\obsbnp}{\obs{b}{\neg p}}
\newcommand{\lit}{\mathsf{Lit}}
\newcommand{\Ob}{\mathcal{O}}
\newcommand{\B}{\mathcal{B}}
\newcommand{\Bdiam}{\hat{\mathcal{B}}}
\newcommand{\ES}{\Ob^{\mathsf{S}}}
\newcommand{\SB}{\B^{\mathsf{S}}}
\newcommand{\Sim}{\mathsf{Sim}}
\newcommand{\Dis}{\mathsf{Dis}}
\newcommand{\pre}{\mathsf{pre}}
\newcommand{\post}{\mathsf{post}}
\newcommand{\axm}[1]{\mathtt{#1}}
\newcommand{\dol}{\boldsymbol{\mathsf{DLM}}}
\newcommand{\ol}{\boldsymbol{\mathsf{OL}}}
\newcommand{\tslogic}{\dol^{(+,-)}}
\newcommand{\dolaxiom}{\mathsf{DOL}}
\newcommand{\olaxiom}{\mathsf{OL}}
\tikzset{
  every picture/.style = {
    thick,
    >=stealth',
    node distance = 1.3em and 2.5em,
  }
  ,
  cross line/.style = {
    preaction = {
      draw=white,
      -,
      line width=4pt
    }
  }
  ,
  state/.style = {
    rectangle,
    rounded corners = 5pt,
    font = \footnotesize,
    draw,
    minimum width = 1em,
    minimum height = 1em
  }
  , 
  label-state/.style = {
    sloped,
    font = \scriptsize,
    label distance = -2.5pt
  }
  , 
  label-edge/.style = {
    font = \scriptsize,    
    label distance = -2.5pt
  }
}
\def\thm@space@setup{%
  \thm@preskip=0.10cm
  \thm@postskip=\thm@preskip
}
\title{Beyond the Spell: A Dynamic Logic Analysis of Misdirection}
\titlerunning{Beyond the Spell: A Dynamic Logic Analysis of Misdirection}
\authorrunning{Icard and Fervari}
\begin{document}
\maketitle

\begin{abstract}
Misdirection can be defined as the intentional action of causing some misrepresentation in an agent, or in a group of agents. Such misrepresentations may result from verbal actions, as in linguistic deception, or from visual actions, as in visual misdirection. Examples of visual misdirection abound (e.g. in nature, in the military), with magic tricks providing a vivid illustration. So far, various types of verbal misdirection have been investigated from a formal perspective (e.g. lying, bluffing) but little attention has been paid to the particular case of visual misdirection. In this paper, we introduce a dynamic epistemic logic to represent not only verbal misdirection on agents' beliefs but also visual misdirection on agents' observations. We illustrate the dynamics of the logic by modelling a classic magic trick known as the French Drop. We also provide a sound and complete axiom system for the logic, and discuss the expressivity and scope of the setting.
\end{abstract}

\keywords{Observations, Beliefs, Misdirection, Dynamic Epistemic Logic, Verbal, Visual, Simulation, Dissimulation, Surprise}




\section{Introduction}
\label{sec:introduction}

Human misdirection refers to an agent’s intentional action to make addressees form inaccurate representations of the world by “\textit{deflecting [their] attention for the purpose of disguise}” \cite{sharpe1988conjurers}. These inaccurate representations, or \textit{misrepresentations}, can arise through different modalities, either strictly verbal (e.g., announcements) or non-verbal (e.g., visual cues) \cite{Bell&Whaley1991,kuhn2014psychologically}. When the modality is verbal, as in announcements, misrepresentations usually consist in inaccurate beliefs (with the notion of belief defined as in~\cite{Wright51}), that can be associated with observations, as in so-called ``\textit{epistemic seeing}''~\cite{Dretske1970,Dretske1979}. Therein, it is also established that when the modality is non-verbal, as in visual observation, misrepresentations consist in inaccurate observations that can involve inaccurate beliefs (i.e., inaccurate \textit{epistemic seeing}) but not necessarily, as in so-called ``\textit{non-epistemic seeing}''. To the best of our knowledge, no formal treatment has been given to visual misdirection and resulting inaccurate observations, in contrast to verbal misdirection. To fill this gap, we propose to use dynamic epistemic logic (DEL) to analyze both verbal and visual misdirection, their similarities and differences, and resulting misrepresentations. For clarity, we call “strategies of misdirection'' the pairings of actions with the misrepresentations they are designed to produce.  

In the conceptual and logical literature, misdirection has been mainly studied through linguistic lenses, i.e. as verbal deception. As the prototypical form of verbal deception~\cite{Carson2010,Mahon2015}, lying has received the most analytical attention (e.g., \cite{Sakama&al2010,Vanditmarsch&al2012}). However, related borderline strategies, such as \textit{bluffing}~\cite{Sakama2015,Vanditmarsch2014}, \textit{half-truths}~\cite{Sakama&Caminada2010}, and \textit{omission of information}~\cite{Sakama&Caminada2010,Sakama&al2014}, have also been formally analyzed. In the course of those investigations, visual strategies of misdirection have been largely overlooked, – even though the concept of visual observation has caught some attention~\cite{Charrier&al2016,Herzig&al2015}.


Visual misdirection is pervasive in humans \cite{Bell&Whaley1991,forbes2011dazzled} as well as in nature \cite{quicke2017,ruxton2019avoiding}.
A fundamental distinction concerning misdirection is between \textit{dissimulation} and \textit{simulation}. In~\cite{Bell&Whaley1991}, dissimulation is defined as the action of hiding facts to deceive addressees, whereas simulation consists in deceiving them by showing some unreal facts as if they were really happening. Important to notice is that, in nature, camouflage and faking embody this broader distinction between \textit{dissimulation} and \textit{simulation} in misdirection.  
In \emph{camouflage}, one individual aims to blend into its environment to remain undetected from predators, while \emph{faking} consists of simulating a false or counterfactual reality to distract their attention. 
Dissimulation and simulation actually occur in all living entities. In humans, those strategies have been practiced since ancient times by the military, e.g., \textit{simulation} with the Trojan horse \cite{whaley1969stratagem}. But, actually, stage magicians have also played a central role, even in the military domain \cite{maskelyne1949magic} by crafting visual illusions to trick enemies, based on \textit{simulation} (with fake tanks) or on \textit{dissimulation} (with light effects)~\cite{fisher2011war}.


\paragraph{\bf Main contributions.} This paper aims to deepen the understanding of misdirection through a logical analysis of its verbal and visual forms. Rather than proposing a new theory of misdirection, we formalize established conceptual distinctions, including the contrast between non-epistemic and epistemic seeing~\cite{Dretske1970,Dretske1979} and that between visual simulation and dissimulation~\cite{Bell&Whaley1991}. Our goal is to clarify, from a logical perspective, how these distinctions interact and what role each plays in the production of misdirection. To this end, we analyze the French Drop, a classical magic trick (see, e.g.~\citep{bell2003toward,kuhn2014psychologically}), in which a performer appears to transfer a coin from one hand to the other while both simulating the transfer and dissimulating the coin. Logical work on magic has already shown how formal tools can illuminate magical effects, for instance by analyzing the sense of impossibility they produce~\cite{smith2016construction}. We likewise address the cognitive effect of surprise in the French Drop, but our main objective is to characterize the deceptive structure of misdirection, verbal and visual, in magic and beyond. We consider this a starting point in modeling different forms of misdirection and their properties.

To carry out this analysis, we introduce a dynamic logic of misdirection, $\dol$, for truthful and deceptive verbal and visual actions. Technically, $\dol$ adapts standard dynamic epistemic logic~\cite{BaltagMS98,DELbook} by combining atoms for visual observation, a doxastic modality for belief, and a dynamic modality for updating beliefs and observations under verbal announcements and visual actions. Our treatment of visual observation is inspired by the logic of (in)attentive agents~\cite{BB23} (see also~\cite{BolanderDHLPS16}), but departs from it in two respects: observing $p$ neither amounts to attending to $p$ nor entails believing $p$, and we distinguish syntactically between observing $p$ and observing $\neg p$. As in~\cite{BB23}, not observing a fact does not imply observing its negation, so the relevant models must satisfy suitable constraints. The dynamic component relies on action models with postconditions~\cite{DK08,DitmarschHL12}, including models in which one designated agent misdirects the others. This framework allows us to track, step by step, how actions affect belief and visual observation. It also lets us formalize not only the French Drop, with emphasis on visual simulation, but also related notions such as dissimulation, stronger notions of belief and observation, and surprise as a cognitive effect. 
Moreover, even though the original logical machinery we introduce is rather simple, we argue that this work still constitutes a novel, interesting application of the Dynamic Epistemic Logic family.

\paragraph{\bf Outline of the paper.} In Section~\ref{sec:concepts}, we discuss the concepts which motivate our proposal and technical approach. We review the logical literature on the formalization of misdirection, and put emphasis on missing features in extant proposals. Section~\ref{sec:definitions} introduces our dynamic logic of misdirection called $\dol$.
This section also provides a sound and complete axiom system for $\dol$ over the appropriate class of models. In Section~\ref{sec:types}, we show that a small fragment of the logic called $\tslogic$ is sufficient for our expressive purposes, since dynamic modalities only make use of four (types of) action models to capture genuine and deceptive actions, either verbal or visual. Section~\ref{subsec:example} illustrates the relevance of $\dol$ by modelling the dynamics of the French Drop trick. In Section~\ref{sec:other}, we show that our setting can express stronger belief attitudes, epistemic versions of observation, more complex notions of misdirection and also the sense of surprise resulting from magic misdirection.  
Section~\ref{sec:final} provides some final remarks and future directions of research.


\section{Setting the Stage for Misdirection}
\label{sec:concepts}

We discuss here key distinctions concerning agents' beliefs and observations. Then, we review extant logical frameworks for formalizing misdirection against the background of the classical distinction between \emph{dissimulation} and \emph{simulation}. Finally, we identify the logical features required to capture verbal and visual misdirection.


\subsection{Believing and Observing}

For simplicity, we distinguish two perspectives, that of the operator, who acts on the basis of some information, and that of the addressee, who reacts to the information received. Among possible modalities, we focus on verbal and visual actions, setting aside others such as auditory, kinesthetic, or olfactory ones. A verbal case is that of a politician addressing potential voters so as to make them believe that he or she is the best candidate. A visual case is that of a stage magician performing for spectators so as to make them observe an unreal event, such as the appearance or disappearance of a rabbit. In verbal contexts, the operator and addressee are usually called the ``speaker'' and the ``hearer''. In visual contexts, the operator may be called the ``performer'', and the addressees the ``spectators'', ``audience'', or ``public''.

%
%

Let us start by drawing distinctions concerning verbal and visual attitudes before considering the actions affecting those attitudes. Agents can entertain three possible, yet non exclusive, attitudes with respect to their surroundings:

\begin{itemize}
\item \textbf{\textit{Observing with believing}}, called here ``epistemic observation'', corresponds to Dretske’s notion of ``epistemic seeing''~\cite{Dretske1970,Dretske1979}. Some objects, such as an apple or a parrot, are not merely seen as shapes but seen \textit{as} an apple or \textit{as} a parrot. Likewise, a detective may observe pieces of evidence not as scattered objects and marks, but as a crime scene indicating a struggle or a crime. The same applies to facts or events. An agent epistemically sees a car moving forward in the street when she sees it \textit{as} a car moving forward, rather than as an unidentified moving object. In such cases, observation involves conscious recognition and yields explicit beliefs about what is observed.

\item \textbf{\textit{Believing without observing}}, called herein ``simple belief'', concerns cases in which agents believe that certain events occur beyond their sight without observing them directly. For example, an agent may believe that a car has crashed on the street on the basis of verbal information from a direct observer, even though she has not witnessed the event herself.



\textbf{\textit{Observing without believing}}, called herein ``atomic observation'', corresponds to Dretske’s notion of ``simple seeing''~\cite{Dretske1970,Dretske1979} and to ``non-epistemic seeing'' in~\cite{Smith2001,Demircioglu2017}. It can be understood as a causal process. When an object, such as an apple or a parrot, is in an agent’s field of vision and light reaches the eyes, it may be perceived without being recognized \textit{as} an apple or \textit{as} a parrot. Likewise, a person looking through a window may observe the street in this atomic, that is, non-epistemic, way if she forms no explicit doxastic representation of what occurs there, for instance of a parrot flying or a car moving forward, because she remains absorbed in her thoughts.

\end{itemize}

To sum up, belief necessarily involves a doxastic stance toward the world, whereas observation does not~\cite{Dretske1970,Dretske1979}. In \textit{epistemic} observation, the agent does adopt such a stance, since she consciously recognizes what she observes and thereby forms beliefs about it. In \textit{atomic}, i.e., \textit{non-epistemic}, observation, by contrast, the agent forms no belief about what she sees, as when she does not attend to what is happening in her immediate surroundings.


Regarding actions, verbal modalities involve acts of \textit{telling} facts to addressees, whereas visual modalities involve acts of \textit{showing} them. In both cases, successful actions produce epistemic effects in addressees, in different ways. Showing leads addressees to observe facts or events \textit{epistemically}, not merely to see them. The act of showing directs their attention to what is presented, and thereby prompts explicit beliefs about it.Telling, by contrast, leads either to simple belief, when addressees cannot observe the relevant facts themselves, or to epistemic observation, when they can. In both cases, the addressee acquires new beliefs that change her uncertainty about the world. These distinctions are summarized in Figure~\ref{tab:perspectives}.

\begin{figure}[t]
\begin{center}
\begin{tikzpicture}[scale=.86]
    \draw[thick] (0,0) circle (0.7cm);
    \node at (0,0) {\scriptsize Telling};
    \draw[thick] (2.1,0) circle (0.7cm);
    \node at (2.1,0) {\scriptsize Showing};
    \draw[thick] (0.3,-3.4) circle (1.85cm);
    \node at (-0.6,-3.5) {\scriptsize\shortstack{Simple \\belief}};
    \draw[thick] (2.1,-3.5) circle (1.85cm);
    \node at (3.05,-3.4) {\scriptsize\shortstack{Atomic \\ observation}};

    \node at (1.2,-3.4) {\scriptsize\shortstack{Epistemic \\ observation}};

    \draw[decorate,decoration={brace,amplitude=8pt},thick] (-0.5,1.2) -- (2.6,1.2);
    \node at (1,1.7) {\footnotesize\textbf{Types of Modalities}};
    \node at (0,1) {\footnotesize\textit{Verbal}};
    \node at (2.1,1) {\footnotesize\textit{Visual}};

    \draw[decorate,decoration={brace,amplitude=8pt,mirror},thick] (-2.2,1.3) -- (-2.2,-4.5);
    \node[rotate=90] at (-2.9,-1.4) {\scriptsize\textbf{Perspectives of Agents}};
    \node[rotate=90] at (-1.9,0) {\scriptsize\textit{Operator's action}};
    \node[rotate=90] at (-1.9,-3) {\scriptsize\textit{Addressee's reaction}};

    \draw[->, thick, dashed] (0,-0.7) -- (0.3,-1.6);
    \draw[->, thick, dashed] (2.1,-0.7) -- (1.05,-2.5);

\end{tikzpicture}
\end{center}
\caption{Possible interactions between agent's perspectives, or attitudes, regarding verbal versus visual informational modalities.}
\label{tab:perspectives}
\end{figure}

Operators also separate into two categories regarding their intentions to addressees: \textit{genuine} versus \textit{misleading}. Intentions are genuine when the operators' goal is to inform addressees, while their intentions are misleading when they are deceptive and plan to mislead addressees. This is the case of misdirection. From the perspective of actions, misdirection can be verbal or visual. Verbal misdirection consists in telling addressees something false, whereas visual misdirection consists in presenting them with a false representation of reality.

\subsection{Two Main Strategies of Misdirection}
\label{subsec:twomain}

Building on~\cite{Bell&Whaley1991}, we formalize two central strategies of misdirection: \textit{dissimulation of the truth} and \textit{simulation of the false}. We treat them in their most general form, distinguishing only between their visual and verbal manifestations. For completeness, this section also reviews extant work on their main subcategories, to indicate the range of cases covered by our formal approach.

\subsubsection{Misdirection as Dissimulation of the Truth} 
\label{subsubsec:dissimulation}


Dissimulation consists in hiding the truth to addressees to prevent them from making an accurate representation of the world. For e.g.~\cite{Bok1983,Fallis2020}, hiding, or concealing information, consists in \textit{masking} true information, — aiming addressees not to form accurate beliefs on the world. Important to notice is that dissimulation in this sense (i.e., hiding or concealing), differs from dissimulation as \textit{withholding information}, or omission. The work in~\cite{Carson2010} insists on the fact that withholding information does not involve any effective action to deceive, contrary to hiding information which requires an effective action of concealment in that respect. 

The effect of successful dissimulation is always \textit{epistemic}: the addressee fails to reach accurate belief and/or accurate epistemic observation on her surroundings. In the verbal case, successful dissimulation leads addressees \textit{not} to reach true beliefs, and can also lead them \textit{not} to reach accurate epistemic observation when speakers use verbal omission to avoid attracting the visual attention of addressees to some specific fact or event. In the visual case now, successful dissimulation necessarily leads addressees to fail to epistemically observe some true reality. This is the case in the French Drop trick, where dissimulation is visual and relies on \textit{masking}. As part of the trick, the magician conceals the coin by \textit{palming} it into her left hand. In doing so, she puts a physical obstacle, namely her left hand, between the eyes of the public and the coin, thus concealing it from their sight. This makes spectators fail to epistemically observe that the coin is (still) in the magician's left hand. We argue that in the French Drop, this dissimulation through palming occurs simultaneously with the action of simulating a fake pass, which enhances the magical effect by diverting the audience's attention.

%

Verbal dissimulation has been formalized from two different perspectives. The first one has focused on expressing the \emph{epistemic preconditions} of dissimulation~\cite{Sakama2015,Vanditmarsch&al2012,Vanditmarsch2014}. By default, the precondition for dissimulation is the existence of a mismatch between the level of uncertainty the speaker has about the world, and the level of uncertainty he or she wants to convey to addressees. A variant on dissimulation is bluffing \cite{Sakama2015,Vanditmarsch2014}: now, speakers are completely uncertain about the truth or falsity of the utterances they make. But, as a precondition, the strongest form of uncertainty consists in full ignorance: in this case, speakers do not know, or at least believe to be true, the content they are uttering. Epistemic logicians have modelled different forms of ignorance but not from the sole perspective of deception~\cite{Vanderhoek2004,Fan&al2015,kubyshkina2021logic}.    

The second perspective for modeling verbal dissimulation is more concerned with the \emph{epistemic goals} of the deceivers. Following \cite{Chisholm&Feehan1977}, \cite{Sakama&Caminada2010} characterizes deception by {\em commission} in which deceivers perform some action to hide the truth, is distinguished from deception by {\em omission} in which deceivers do not perform any action to this end. Omission can be either \emph{complete} or only \emph{partial}. When it is complete, deceivers intend to hide the whole truth to addressees and thus, to prevent the formation of \textit{any} belief in them. In this regard, \cite{Sakama&Caminada2010,Sakama&al2014} have formalized the epistemic goals of complete omission under the name of \textit{``withholding information''}. This strategy should be contrasted with the utterance of \textit{``half-truths''} in which dissimulation is only partial. Following \cite{Egre&Icard2018,Egre2021}, half-truths are vague utterances whose status is deliberately uncertain between truth and falsity, and that speakers may use to make addressees draw mistaken interpretations and conclusions. The epistemic goal of such half-truths is to keep addressees in the dark, as it is modeled in e.g. \cite{Sakama&al2014}.

By contrast with verbal dissimulation, visual dissimulation has not been logically investigated so far, though the notions of observation and perception have caught logical attention in e.g. \cite{Herzig&al2015,Charrier&al2016,BB23}. In visual dissimulation, one attempts to escape the gaze of others based on using various means, detailed in the taxonomy of deception proposed by \cite{Bell&Whaley1991}. One of those means, called \textit{``masking''}, consists in preventing any potential observation by putting an obstacle between oneself (or the object to hide) and the observer in order to remain invisible (or the object). Another mean used for visual dissimulation is  \textit{``camouflage''}, or \textit{``crypsis''}, in which the deceiver blends into its surroundings by adequately matching its colours and/or shapes. To the best of our knowledge, none of these distinctions has been formalized so far. 

\subsubsection{Misdirection as Simulation of the False}  
\label{subsubsec:simulation}

In contrast with dissimulation, simulation consists in presenting false information to addressees, either by using available false information or by fabricating false information, as in \textit{faking} or \textit{fabrication} \citep[]{Fallis2009b}. The goal of simulation is to entertain addressees with inaccurate representations of the world. 

As for dissimulation, the effect of successful simulation is always \textit{epistemic} too, consisting in wrong beliefs and/or wrong epistemic observations for addressees. In the verbal case, successful simulation leads addressees to false beliefs, but can also lead them to false epistemic observation when addressees also see the physical manifestation \textcolor{black}{of the simulated action.} In the visual case now, successful simulation converts atomic observation into false epistemic observations for the addressees. This is the case in the French Drop trick: before the magician performs any visual action, the spectators observe the stage atomically since they do not necessarily pay attention to any specific object or fact of the scene. They just expect the magic trick to begin. Then, the magician enters the scene and the trick starts: she dissimulates the coin by palming into her left hand and then simulates the false action of moving the coin from left to right hand. By the magician performing those visual actions, the spectators' atomic observations become epistemic observations. At the end of the effect, indeed, the audience believe and epistemically observe that the coin has moved to the magician's right hand when this is not the case.



Extant formal literature on verbal simulation has given particular attention to the act of \textit{lying}, as a prototypical instance of verbal simulation in which speakers utter propositional contents they believe to be false with the intention to deceive addressees. Intention (to deceive) has been formalized in static epistemic logic with classical Kripke models \cite{Sakama&al2010,Sakama&al2014}, while their assertoric and deceptive dimensions have been expressed in dynamic epistemic logic \cite{Liu2009,Vanditmarsch&al2012,Vanditmarsch2014,Agotnes&al2016}. Using announcement operators to express intentionally false utterances, as in~\cite{Liu2009}, \cite{Vanditmarsch&al2012,Vanditmarsch2014} logically distinguish \textit{“subjective lies”} (i.e. believed to be false utterances that happen to be true in reality) from more \textit{“objective lies”} (i.e. believed to be false utterances that also turn out to be really false). In \cite{Agotnes&al2016}, more exotic cases of lying are formalized, such as Moorean lies, i.e. utterances that are initially false but become true once they are announced. Apart from lies, verbal simulation can also consist in making addressees perform \emph{wrong inferences} based on the speakers' utterances. In particular, \cite{Sakama&al2010} formalizes \textit{“deductive lies''} in which speakers want addressees to infer inaccurate conclusions from their utterances, distinguishing them from \textit{“abductive lies''} in which speakers expect them to jump directly to false conclusions, without any deduction \citep{Sakama&Katsumi2016}. 

Here again, by contrast with verbal simulation, visual simulation has not been logically investigated to the best of our knowledge. As mentioned, only the notions of observation and perception underlying visual simulation have caught logical attention recently \cite{Herzig&al2015,Charrier&al2016,BB23}. 
In turn, our interest is to model these two concepts in general via dynamic actions changing the perception of an agent, incorporating also the notion of observation into the picture as a novelty.

\subsection{Modelling Verbal and Visual Misdirection} 

\subsubsection{Misdirection in the French Drop Trick}

Misdirection has been analyzed extensively by stage magic practitioners (see e.g.~\citep{maskelyne1911our,fitzkee1975magic}). When magicians aim to master a new trick, they often break it down into two key components: the \textit{method} used to perform the trick, and the \textit{effect} that results from the method (for a detailed discussion on the stages involved in producing magical effects, see \cite{sharpe1932neo}). Crucially, the method refers to the {\em action}, or {\em combination of actions}, employed by magicians to lead spectators into a deceptive state, known as the {\em effect} (see e.g.~\cite{Kuhn&al2016}).

 In magic tricks, spectators are deceived because they are disoriented by the magician's method. They expect the magician's action to result in some state of affairs but, as a result of the magician's action, they observe that another state of affairs obtain which contradicts their expectations. At the heart of the magician's deceitful effect is a mismatch between the spectators' expectations and the result they actually observe. Spectators expect the magician to do some particular action but in fact, the magician has not played this action but another one resulting in deception and surprise.  
%
%
Here, we use the French Drop Trick (see \citep{bell2003toward,kuhn2014psychologically}) to formalize and illustrate visual actions of misdirection. Accordingly, states of deception in the French Drop concern the method and, thus, the effect.


\subsubsection{Casting the magic spell \textit{logically}}

We aim to define actions of misdirection of two distinct types. The first type is verbal and consists, for an operator, to induce a false belief in an addressee by making a linguistic utterance. The second type is visual and now consists to trigger a wrong observation \textcolor{black}{(i.e., observing something that differs from reality)} in the addressee by making gestures. Those types act at two different levels in addressees. In verbal misdirection, the operator's announcement acts at the level of the addressees' beliefs to make them believe the false, or at least to prevent them from believing the truth. In visual misdirection, the operator's gestures act at the level of the addressees' observations and beliefs, to make them see something that is not the case, as it happens in visual illusions for instance.      

In this paper, the logic $\dol$ will contain a static operator for expressing agents' beliefs as well as special atomic formulas for expressing their (non-epistemic) observations. But $\dol$ will also contain a dynamic operator to capture the agents' informational actions. Semantically, we define specific action models for verbal actions, or actions of telling, and specific action models for visual actions, or actions of showing. In both cases, actions are either genuine or misleading.
       
Our setting turns out to be powerful enough to distinguish between simulation and dissimulation, in the verbal and visual cases. We argue that it can also express other notions such as epistemic observation and stronger notions of belief and observation. Our logic shows that visual actions convert atomic observation in epistemic observation but also that that simulation (of the false) implies dissimulation (of the truth). Finally, we use $\dol$ as a tool to put more emphasis on the specificity of misdirection in magic tricks compared to other forms: the state of surprise caused by the deception involved. That way, we aim to capture the essence of misdirection in magic tricks as a novel application of Dynamic Epistemic Logic, but also to pave the way for using $\dol$ to analyze other forms of misdirection in the future.


\section{A Dynamic Logic of Misdirection}
\label{sec:definitions}

We present the syntax and semantics of our \emph{Dynamic Logic of Misdirection}~($\dol$). We introduce a class of models called herein \emph{observational epistemic models} that contain two kinds of information: about beliefs, which can result from both verbal communication and visual perception, and information about the facts that are observable for the agents, which result from visual perception only. While beliefs will be represented in the language via a standard doxastic modality~\cite{Hintikka:kab}, observations will be modelled by special atomic symbols, similarly to the way~\cite{BB23} models attention.  
Unlike other approaches like \cite{Herzig&al2015,Charrier&al2016}, our framework will allow us to model dynamic aspects of beliefs and visual misdirection. We also discuss the appropriateness of certain classes of models and present a standard axiomatization for the logic.

\subsection{Syntax and Semantics}
\label{subsec:synsem}

The language presented herein is an adaptation of the Dynamic Epistemic Logic (DEL) for (in)attentive agents presented in~\cite{BB23}, based on generalizing~\cite{BaltagMS98}. 
Throughout the text, we let $\prop$ be a countable set of propositional symbols, $\agt$ a finite set of agent symbols, and $\obsym=\{\obsp,\obsnp \mid a\in\agt, \ p\in\prop\}$ a set of observation symbols. We read $\obsp$ (respectively $\obsnp$) as \emph{``agent $a$ observes that $p$ (respectively~$\neg p$) is the case''}. Notice that in~\cite{BB23}, some similar extra symbols are used to state that an agent is paying attention to a certain atomic proposition $p$. But here, the conceptual difference is that an extra atomic symbol $\obsp$ is added to stand for non-epistemic observation, i.e., for the fact that an agent $a$ only visually perceives that some fact $p$ is happening, possibly without paying attention to it. We start out by formally introducing the models we use to interpret our logic.

\begin{definition}\label{def:models} A \emph{relational (Kripke) model} is a tuple $\model=\tup{W,\{R_a\}_{a\in\agt},V}$ where: 
\begin{itemize}
    \item $W$ is a non-empty set of \emph{possible worlds};
    \item $R_a\subseteq W\times W$; 
    \item $V: \prop\cup\obsym \ra 2^W$ is a \emph{valuation function}.
\end{itemize}

Let $w\in W$, we call the pair $(\model,w)$ a \emph{pointed model}, with parentheses usually dropped. 
\end{definition}

Below we introduce a particular class of models that turns out to be interesting in the rest of the paper. Intuitively, in this class, we force the relations $R_a$ to have the classical properties of \emph{belief}~\cite{Wright51,Hintikka:kab}. For observations, 
we need to impose conditions to guarantee that they are \emph{consistent}. 
We will come back to this point later on the paper, after presenting the whole framework. 

\begin{definition}\label{def:obs-epis-model}
We define the class of \emph{observational epistemic models} as the set of all the pointed models $\model,w$ such that $\model=\tup{W,\{R_a\}_{a\in\agt},V}$, $w\in W$, and where: 

\begin{itemize}
    \item each binary relation $R_a$ is Euclidean, transitive and serial; and
    \item for all $p\in\prop$ and $a\in\agt$, $w\in V(\obsp)$ implies $w\notin V(\obsnp)$.
\end{itemize}
\end{definition}

At this point, an important distinction to make concerns the models used for (in)attentive agents in~\cite{BB23}, compared to the ones we use here. In \cite{BB23}, for each propositional symbol $p$, and each agent $a$, there is a special propositional symbol $\mathsf{h}_ap$ that labels a world in case the agent $a$ is attentive to symbol $p$ at such a world. In our setting, extra symbols for observation, written $\obsp$ (with $p\in\prop$), are no longer indicating ``attentiveness'', but they indicate that agent $a$ is \emph{observing} that a certain fact $p$ holds, without necessarily paying attention to it. As already mentioned, the agent observes fact $p$ in a non-epistemic way in this case (see e.g.~\cite{Dretske1970,Dretske1979,Smith2001,Demircioglu2017}), since the agent does not endorse any doxastic stance regarding the fact observed (e.g. when we see an entity in the water without forming any belief about it, for instance that it is a fish). In turn, we introduce two symbols instead of one: $\obsp$ to indicate that the agent observes that $p$ is the case, and $\obsnp$ to indicate that the agent observes that $p$ is not the case. As one can notice, $\obsp$ being false is not the same as $\obsnp$ being true since they can both be false but they cannot both be true at the same time. Finally, notice that each relation $R_a$ is Euclidean, transitive and serial, i.e., has the usual (basic) properties of belief~\cite{Wright51,Hintikka:kab}.
 
The actions of misdirection that some agent(s) perform will be represented via standard tools of Dynamic Epistemic Logic known as \emph{action models}~\cite{BaltagMS98}.
Recall that action models are objects describing semantically the effects intended by an agent over an epistemic situation (i.e., over a model). They can be used as parameters in formulas of the language, in particular, inside a dynamic modality. In a nutshell, dynamic modalities perform updates in the current model, in this case depending on the semantic description given by an action model. When interpreting the dynamic modality, action models tell us how to update the original model, raising a new epistemic situation. 
In Definitions~\ref{def:actionmodels} and~\ref{def:syntax}, action models and the language are defined as usual, by mutual recursion.

\begin{definition}\label{def:actionmodels}
An \emph{action model} is a tuple $\amodel=\tup{E,\{\ra_a\}_{a\in\agt},\pre,\post}$, where:
\begin{itemize}
    \item $E$ is non-empty finite set of epistemic \emph{actions} that, conceptually, can be either of verbal or visual type;
    \item $\ra_a\subseteq E\times E$, for each $a\in\agt$, is Euclidean, transitive and serial;
    \item $\pre: E \ra \dol$ is the \emph{precondition function}, where $\dol$ is the language given by Definition~\ref{def:syntax}; and 
    \item $\post: E \ra (\prop\cup\obsym) \ra \dol$,  the \emph{postcondition function}, is such that for a finite subset $D\subseteq\prop\cup\obsym$, we have:
    \begin{itemize}
        \item $\post(e)(p)\in\set{\top,\bot}$, for $p\in D$, 
        \item for all $\set{\obsp,\obsnp}\subset\obsym$, we have $\obsp\in D$ if and only if $\obsnp\in D$, and for all $e\in E$,
        \begin{itemize}
            \item $\post(e)(\obsp)=\top$ implies that $\post(e)(\obsnp)=\bot$, 
            \item $\post(e)(\obsnp)=\top$ implies that $\post(e)(\obsp)=\bot$,
        \end{itemize}
        \item for all $p\notin D$, and all $e\in E$, $\post(e)(p)=p$.
    \end{itemize}

\end{itemize}

Let $e\in E$, we call $(\amodel,e)$ a \emph{pointed action model} (parentheses usually dropped). 
\end{definition} 

Notice that in Definition \ref{def:actionmodels}, action models incorporate \emph{postconditions}, as in~\cite{DK08}, and unlike~\cite{BB23}. This variant does not add expressive power but simplifies the definition of actions of verbal and visual misdirection since the truth value of atomic observations can be updated. Postconditions are also required to update the models consistently, and respecting the properties of an observational model (see Lemma~\ref{lemma:preservation-updates}). In this regard, we only allow updates of the truth value of a finite set of symbols (either propositional or atomic observations) with $\top$ or $\bot$, and for the rest, $\post$ acts as the identity function. Moreover, we ask that if an observation $\obsp$ is updated, so is its counterpart $\obsnp$, and the update is defined accordingly. In the next definition, we make precise the syntax of our dynamic logic of misdirection.

\begin{definition}\label{def:syntax}
The set of well-formed formulas of the language $\dol$ is given by the following Backus-Naur Form:
\[
\varphi, \psi ::= p \mid \obsp \mid \obsnp \mid \neg\varphi \mid \varphi\wedge\psi \mid \B_a\varphi \mid [\amodel,e]\varphi,
\]
\noindent where $p\in\prop$; $a\in\agt$;
and $\amodel,e$ is a pointed action model, whose preconditions and postconditions are defined in a lower level of the recursive definition. Other Boolean connectives are defined as usual: $\bot:= p\wedge\neg p$, $\top:=\neg \bot$, $\varphi\vee\psi:=\neg(\neg\varphi\wedge\neg\psi)$, $\varphi\ra\psi:=\neg\varphi\vee\psi$ and $\varphi\lra\psi:=(\varphi\ra\psi)\wedge(\psi\ra\varphi)$. We also define the following abbreviations: $\Bdiam_a\varphi := \neg\B_a\neg\varphi$; and $\tup{\amodel,e}\varphi := \neg[\amodel,e]\neg\varphi$. 
The set of literals is defined as $\lit=\{\ell \mid \ell \in \prop \mbox{ or } \ell=\neg p \mbox{ with } p\in\prop\}$. If $\ell=\neg p$, we use sometimes $\neg\ell$ to denote $p$; and use $\obs{a}{\neg\ell}$ to denote $\obs{a}{p}$. 
Finally, we define the fragment~$\ol$ to be the language generated from $\dol$ when we remove formulas of the form $[\amodel,e]\varphi$.
\end{definition}

The modalities introduced in the BNF above should be read as follows. Formula~$\B_a\varphi$ means `agent $a$ believes that $\varphi$ holds'. Its dual is $\Bdiam_a\varphi$ meaning `$\varphi$ is believable for agent $a$', or, in other words, that `$\varphi$ is consistent with agent $a$'s beliefs'. Here, by taking inspiration from~\cite{BB23}, observations are not epistemic but factual. Thus, they are represented by special atomic propositions: $\obsp$ indicates that \emph{agent $a$ observes that $p$ is the case}, while $\obsnp$ states that \emph{agent $a$ observes that $p$ is not the case}. 
Definition~\ref{def:semantics} details the semantics of all the formulas of $\dol$ but, intuitively, beliefs are conceived as weak doxastic attitudes, or so-called \emph{prima facie} beliefs. Thus, when an agent $a$ believes that $\varphi$, namely $\B_a\varphi$, agent $a$ may lack any justification for believing $\varphi$. In Section~\ref{sec:other}, this notion of \textit{weak belief} is used to define a notion of \textit{strong belief} according to which agents hold some justification to support their beliefs. 
Finally, dynamic formulas $[\amodel,e]\varphi$ are read as `after the  action $\amodel,e$ is executed, $\varphi$ holds', while its dual $\tup{\amodel,e}\varphi$ is read as `the action $\amodel,e$ is executable at the current situation and after its execution $\varphi$ will hold'. 


Now, we define how an action model affects an observational epistemic model, — potentially modifying the doxastic states and observations of a set of agents. In general, these actions are acts of communication between agents which can consist in verbal announcements or visual gesture, and can be either genuine or deceptive. As in usual settings, such communication can be public, private or semi-private, but might involve both verbal and visual information herein.
Moreover, notice that while in~\cite{BB23}, actions of communication are \emph{truthful}, we consider both truthful and untruthful actions here, as we want to address different forms of communication. In addition, our action models are enriched with \emph{postconditions}, which stand as descriptions of how certain facts change after an action, following e.g.~\cite{DK08}. In our case, postconditions are used to express changes in the facts that an agent is observing, as well as changes in the agent's observations themselves.

\begin{definition}\label{def:product}
 Let $\model=\tup{W,\set{R_a}_{a\in\agt},V}$ be a relational model, and 
 let $\amodel=\tup{E,\{\ra_a\}_{a\in\agt},\pre,\post}$ be an action model. We define their product $\model\otimes\amodel=\tup{W',\{R'_a\}_{a\in\agt},V'}$, where:
 \begin{itemize}
     \item $W'=\set{(w,e)\in W\times E \mid \model,w\models\pre(e)}$;
     \item $(w,e) R'_a (v,f)$ if and only if $wR_av$ and $e \ra_a f$, for all $a\in\agt$;
     \item $V'(p) = \{(w,e) \mid \model,w\models \post(e)(p)\}$.
 \end{itemize}
 \end{definition}
 
With all these definitions at hand, we can now define the semantics of $\dol$.

\begin{definition}\label{def:semantics}
Let $\model=\tup{W,\{R_a\}_{a\in\agt},V}$ be a model, and let $w\in W$. The satisfaction relation $\models$ between $\model,w$ and formulas $\varphi$ is defined as follows:
\[ 
    \begin{array}{l@{\quad }c@{\quad }l}
    \model,w\models p & \iffdef & w\in V(p) \ \quad (p\in\prop\cup\obsym)\\
    \model,w\models\neg\varphi & \iffdef & \model,w\not\models\varphi \\
    \model,w\models\varphi\wedge\psi & \iffdef & \model,w\models\varphi \mbox{ and } \model,w\models\psi \\
    \model,w \models \B_a\varphi & \iffdef & \mbox{for all $v\in W$ such that } w R_a v, ~ \model,v\models\varphi \\
\model,w \models [\amodel,e]\varphi & \iffdef & \model,w\models\pre(e) \mbox{ implies }   (\model\otimes\amodel),(w,e)\models\varphi.
    \end{array}
\]
A formula $\varphi$ is satisfiable if there exists a pointed model $\model,w$ such that $\model,w\models\varphi$, and it is valid if $\model,w\models\varphi$, for all pointed models $\model,w$.
\end{definition}

\begin{remark}\label{rem:dualmodality}
The dual modality $\tup{\amodel,e}$ of $[\amodel,e]$ is interpreted as:
\[
\model,w\models\tup{\amodel,e}\varphi \ \iffdef \ \model,w\models\pre(e) \mbox{ and }   (\model\otimes\amodel),(w,e)\models\varphi.
\]
\end{remark}

Let us make some observations concerning the execution of an action in our setting. Here we are interested in particular \emph{``informational actions''}, in the sense that some agent $a$ performs an informative act which leads to update the beliefs and observations of the rest of the agents. 
%
%
If agent $a$ executes the action~$e$, this action does not change her own factual beliefs and visual observation of the environment. However, other agents' beliefs and visual observations might change, but not necessarily though. This means that the agent can execute an action with some informational purpose but she cannot guarantee that this action will modify the other agents' beliefs and/or visual observations. Consider the following example: a magician performs an action to misdirect the attention of a crowd of spectators. The magician knows that actual facts do not change during the trick, but what she does not know, and cannot guarantee, is whether the spectators will believe that actual facts \emph{did} change. In other words, the magician cannot be sure that her misdirection will succeed. 
In formula, for an informational action model $(\amodel,e)$,  we have that $\model,w\models\B_a\varphi$ does not imply $\model,w\models[\amodel,e]\B_a\varphi$. 

Regarding the modelling of magic tricks, the approach of \cite{smith2016construction} using Propositional Dynamic Logic (PDL) differs from the DEL approach of $\dol$ in terms of focus of interest. While both rely on modelling epistemic notions, such as beliefs and expectations, to express states of deceptions, the goal of $\dol$ is to better understand the nature of misdirection. In particular, we are interested in visual misdirection, so the logic encodes notions such as atomic observation, as a non-epistemic attitude, and visual action of observation, now as an epistemic attitude, in addition to classical verbal announcements. By contrast,  \cite{smith2016construction} puts more emphasis on the constructional aspects of conjuring magic.\footnote{Analyzing Martin Gardner's \textit{Turnabout} trick (see \cite{fulves1977big}) in that respect.} They aim to express the various forms of \textit{impossibility} spectators may experience  with magical effects, based on characterizing the evidence relationship between expected events and real occurring events, --such as similarity, perceptual equivalence, structural equivalence, and congruence. To some extent, however, our two perspectives meet since, as we will see, $\dol$ is used in subsection \ref{ssec:surprise} to express the effect of \textit{surprise} obtained with the French Drop trick.

\subsection{Axiomatic System}
\label{sec:axiom}

We start out by introducing the axiom system $\olaxiom$ for the static fragment $\ol$.

\begin{definition}\label{def:staticaxioms}
    The axiom system $\olaxiom$ contains all the axioms and rules in Table~\ref{tab:axioms:static}.
\begin{table}
\fbox{
\begin{small}
\begin{minipage}{0.5\textwidth}
\begin{tabular}{ll}
    $\axm{CPL}$ & All axioms from propositional logic \\
    $\axm{Obs}$ & $\vdash\obsp \ra \neg\obsnp$ \\
    $\axm{K_\B}$ & $\vdash\B_a(\varphi \ra \psi) \ra (\B_a\varphi \ra \B_a\psi)$ \\ 
    $\axm{4_\B}$ & $\vdash\B_a\varphi \ra \B_a\B_a\varphi$ \\
    $\axm{5_\B}$ & $\vdash\neg\B_a\varphi \ra \B_a\neg\B_a\varphi$ \\
    $\axm{D_\B}$ & $\vdash\B_a\varphi \ra \Bdiam_a\varphi$  \\
\end{tabular}
\end{minipage}
\begin{minipage}{0.42\textwidth}
\begin{align*}
        & & \vcenter{\infer[\axm{(MP)}]{\vdash\psi}{\vdash \varphi \quad {\vdash\varphi \to \psi}}}
        \\
        & \\
        & &\vcenter{\infer[\axm{(Nec_\B)}]{\vdash\B_a\varphi}{\vdash\varphi}} 
\end{align*}
\end{minipage}
\end{small} }
\caption{Axioms and rules for $\ol$.} \label{tab:axioms:static}
\end{table}
\end{definition}

Notice that in the presentation of the axiomatic system, we use axiom schemes instead of plain axioms, i.e., we use meta-variables for formulas instead of fixed propositional symbols 
(see e.g.~\cite{van2012logic} for details on the use of axiom schemes). Still, for simplicity's sake, we refer to axiom schemes just as \emph{axioms} in what follows.

As expected, axioms for $\B_a$ are classical for doxastic modalities. Axioms $4_\B$ and $5_\B$ make sense for beliefs obtained by verbal means or any kind of sensory activity other than observations: it makes sense that one believes what one actually believes but also believes that she disbelieves something when it is the case (i.e., they are the standard axioms of introspection). Axiom $\axm{D}_\B$ accounts for seriality, and means that what is believed by an agent must be consistent with the agent's beliefs. Concerning observation, we do not have axioms at the epistemic level, since our approach considers that agents may observe without necessarily believing. In our framework, an agent can observe that $p$, that is $\obsp$, without $p$ being true, as it happens to be the case in common illusions or hallucinations.
However, as expected, an agent cannot observe $p$ and $\neg p$ at the same time. What we need to establish is the inconsistency of observing a fact and its negation, given by axiom $\axm{Obs}$. Notice that this axiom is consistent with our basic requirement imposed on valuations in observational epistemic models (see Definition~\ref{def:obs-epis-model}). 

The proof of soundness and completeness follows standard arguments of modal logic (see, e.g.,~\cite{Chellas_1980,Blackburn&deRijke&Venema01,HML}). Soundness is straightforward, except for the axiom $\axm{Obs}$ which is novel. Notice that if a pointed observational epistemic model $\model,w$ is such that $\model,w\models\obsp$, it is necessarily the case that  $\model,w\not\models\obsnp$, because $w\in V(\obsp)$ implies $w\notin V(\obsnp)$, by the second condition in Definition~\ref{def:obs-epis-model}. Thus, the axiom $\obsp \ra \neg\obsnp$ is also valid. 

Following~\cite[Proposition 4.12]{Blackburn&deRijke&Venema01}, given an arbitrary consistent set of formulas $\Gamma$,  completeness is obtained by building a model (known as the `canonical model') satisfying all the formulas in $\Gamma$. Associated notions like derivability (via $\olaxiom$), theoremhood, maximal consistent sets (written MCS) and consistency/inconsistency are defined in the usual way.

\begin{definition}[Canonical Model]\label{def:canonical}
    The \emph{canonical model} $\model^c=\tup{W^c,\{R^c_a\}_{a\in\agt},V^c}$ is defined as:
    \begin{itemize}
        \item $W^c := \{ \Delta \mid \Delta \mbox{ is an MCS of } \olaxiom \}$,
        \item $R^c_a := \{ (\Delta_1,\Delta_2)\in W^c\times W^c \mid \mbox{ for all $\psi$ in $\ol$}, \B\psi\in\Delta_1 \mbox{ implies } \psi\in\Delta_2 \}$, for each $a\in\agt$, and 
        \item $V(p) := \{\Delta \in W^c \mid p\in \Delta\}$, for each $p\in\prop$.
    \end{itemize}
\end{definition}

The canonical model will be adequate as a model for any consistent set of formulas. But since this model is built over \emph{maximal} consistent sets, the next classical lemma needs to be established. 

\begin{lemma}[Lindenbaum Lemma]\label{lemma:lindenbaum}
    Let $\Gamma$ be a consistent set. Then, there exists an MCS $\Gamma'$ such that $\Gamma\subseteq\Gamma'$. 
\end{lemma}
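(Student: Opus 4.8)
The plan is to prove the Lindenbaum Lemma by the standard maximalization construction, which is independent of the particular modal or atomic apparatus of $\dol$ and relies only on the fact that $\olaxiom$ includes all of classical propositional logic (axiom $\axm{CPL}$) and that the language is countable. First I would fix an enumeration $\varphi_0, \varphi_1, \varphi_2, \dots$ of all $\ol$-formulas; this is possible since $\prop$ is countable, $\agt$ is finite, and hence $\obsym$ is countable, so the set of well-formed $\ol$-formulas is countable. Starting from $\Gamma_0 := \Gamma$, I would define an increasing chain $\Gamma_0 \subseteq \Gamma_1 \subseteq \Gamma_2 \subseteq \cdots$ by setting $\Gamma_{n+1} := \Gamma_n \cup \{\varphi_n\}$ if this set is $\olaxiom$-consistent, and $\Gamma_{n+1} := \Gamma_n \cup \{\neg\varphi_n\}$ otherwise. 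Then $\Gamma' := \bigcup_{n\geq 0}\Gamma_n$.

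The key steps are then: (i) show by induction on $n$ that each $\Gamma_n$ is consistent — the base case is the hypothesis, and for the inductive step one uses the elementary fact of propositional reasoning that if $\Gamma_n \cup \{\varphi_n\}$ is inconsistent then $\Gamma_n \vdash \neg\varphi_n$, whence $\Gamma_n \cup \{\neg\varphi_n\}$ is consistent (otherwise $\Gamma_n$ itself would derive both $\neg\varphi_n$ and its negation, contradicting the inductive hypothesis); (ii) show $\Gamma'$ is consistent, which follows because any derivation of a contradiction uses finitely many premises, hence all premises lie in some $\Gamma_n$, contradicting (i); (iii) show $\Gamma'$ is maximal, i.e. for every formula $\varphi$, either $\varphi\in\Gamma'$ or $\neg\varphi\in\Gamma'$ — this is immediate since $\varphi = \varphi_n$ for some $n$ and by construction exactly one of $\varphi_n, \neg\varphi_n$ was added at stage $n+1$; and finally $\Gamma\subseteq\Gamma'$ trivially since $\Gamma = \Gamma_0 \subseteq \Gamma'$.

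There is no real obstacle here: this is a textbook argument (cf.~\cite[Proposition 4.12]{Blackburn&deRijke&Venema01}) and none of the novel features of $\dol$ — the observation atoms $\obsp$, $\obsnp$, the axiom $\axm{Obs}$, or the doxastic axioms — play any role, since the construction only manipulates consistency with respect to the full deductive system. The only point worth a sentence of care is the mild bookkeeping that a "maximal consistent set" is by definition a consistent set that is not properly contained in any consistent set, and to check that the set $\Gamma'$ produced above indeed has this property one combines consistency from (ii) with maximality from (iii): if $\Gamma'\subsetneq \Sigma$ with $\Sigma$ consistent, pick $\psi\in\Sigma\setminus\Gamma'$; by (iii) $\neg\psi\in\Gamma'\subseteq\Sigma$, so $\Sigma$ contains both $\psi$ and $\neg\psi$, contradicting its consistency. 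This completes the proof.
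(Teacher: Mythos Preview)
Your proposal is correct and is precisely the standard textbook argument; the paper itself does not spell out a proof but simply refers to \cite[Lemma 4.17]{Blackburn&deRijke&Venema01}, which is exactly the construction you have written out. Nothing in your argument needs adjustment.
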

\begin{proof}
    See e.g.~\cite[Lemma 4.17]{Blackburn&deRijke&Venema01}.
\end{proof}

Moreover, the canonical model defined above is adequate for our purpose, only if it is an observational epistemic model.

\begin{lemma}\label{lemma:canonical-observ}
    The canonical model $\model^c=\tup{W^c,\{R^c_a\}_{a\in\agt},V^c}$ from Definition~\ref{def:canonical} is an observational epistemic model.
\end{lemma}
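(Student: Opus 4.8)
The plan is to verify that the canonical model $\model^c$ satisfies the two defining conditions of an observational epistemic model from Definition~\ref{def:obs-epis-model}: first, that each $R^c_a$ is Euclidean, transitive and serial; and second, that for every $p\in\prop$ and $a\in\agt$, membership $\Delta\in V^c(\obsp)$ implies $\Delta\notin V^c(\obsnp)$.

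For the relational properties, I would argue in the standard canonical-model style, reading off each frame property from the corresponding axiom in Table~\ref{tab:axioms:static}. Concretely: transitivity of $R^c_a$ follows from axiom $\axm{4_\B}$ (if $\B_a\varphi\in\Delta_1$ then $\B_a\B_a\varphi\in\Delta_1$, hence $\B_a\varphi\in\Delta_2$ whenever $\Delta_1 R^c_a\Delta_2$, so $\varphi\in\Delta_3$ whenever additionally $\Delta_2 R^c_a\Delta_3$); the Euclidean property follows from axiom $\axm{5_\B}$ (via the contrapositive: if $\B_a\varphi\in\Delta_1$, suppose $\Delta_1 R^c_a\Delta_2$ and $\Delta_1 R^c_a\Delta_3$; one shows $\B_a\varphi\in\Delta_2$ using that $\neg\B_a\varphi\notin\Delta_1$ forces, via $\axm{5_\B}$, $\neg\B_a\neg\B_a\varphi\notin$ ... ); and seriality follows from axiom $\axm{D_\B}$, since given any MCS $\Delta_1$ the set $\{\psi \mid \B_a\psi\in\Delta_1\}$ is consistent (otherwise $\B_a\bot$ would be derivable at $\Delta_1$, contradicting $\axm{D_\B}$ together with $\axm{Nec_\B}$/$\axm{K_\B}$), hence extends to an MCS $\Delta_2$ by Lemma~\ref{lemma:lindenbaum} with $\Delta_1 R^c_a\Delta_2$. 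These are entirely routine; I would cite \cite[Section 4.2]{Blackburn&deRijke&Venema01} for the general recipe and only spell out one case in detail.

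The genuinely new point is the valuation condition on observation atoms, and this is where axiom $\axm{Obs}$ does the work. Fix $p\in\prop$, $a\in\agt$, and an MCS $\Delta$ with $\Delta\in V^c(\obsp)$, i.e.\ $\obsp\in\Delta$. By axiom $\axm{Obs}$ we have $\vdash\obsp\ra\neg\obsnp$, so $\neg\obsnp\in\Delta$ since $\Delta$ is deductively closed, and hence $\obsnp\notin\Delta$ since $\Delta$ is consistent; therefore $\Delta\notin V^c(\obsnp)$. Note this requires that $V^c$ be extended to the observation symbols in $\obsym$ in the obvious way, $V^c(\ell):=\{\Delta\in W^c \mid \ell\in\Delta\}$ for $\ell\in\prop\cup\obsym$ (Definition~\ref{def:canonical} only wrote the clause for $p\in\prop$, but the same definition is intended for observation atoms, which are just distinguished propositional symbols). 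I would state this explicitly at the start of the proof to avoid ambiguity.

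I do not expect a serious obstacle here: the lemma is a bookkeeping step ensuring the canonical model lands in the right class of models, and every piece follows either from a textbook argument (the three frame conditions) or from a one-line application of $\axm{Obs}$ (the observation-consistency condition). If anything needs care, it is merely making sure the canonical valuation is declared on all of $\prop\cup\obsym$, not just on $\prop$, so that the statement "$\model^c$ is an observational epistemic model" even typechecks; once that is fixed, the proof is immediate.
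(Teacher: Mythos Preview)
Your proposal is correct and follows essentially the same approach as the paper: the relational conditions are obtained from axioms $\axm{4_\B}$, $\axm{5_\B}$, $\axm{D_\B}$ by the standard canonical-model argument (the paper simply cites \cite[Section~4.3]{Blackburn&deRijke&Venema01} rather than spelling out cases), and the observation-consistency condition is derived from $\axm{Obs}$ via $\axm{MP}$ and basic MCS properties exactly as you do. Your remark that $V^c$ must be understood as defined on all of $\prop\cup\obsym$ is a useful clarification that the paper leaves implicit.
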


\begin{proof}
    In order to prove this proposition, we have to show that each relation $R^c_a$ is transitive, Euclidean and serial, and that for all $p\in\prop$ and $a\in\agt$,  $\Delta\in V^c(\obsp)$ implies $\Delta\notin V^c(\obsnp)$. The properties of $R^c_a$ are guaranteed by axioms $\axm{4}_\B$, $\axm{5}_\B$ and $\axm{D}_\B$ (see~\cite[Section~4.3]{Blackburn&deRijke&Venema01}). Suppose that $\Delta\in V^c(\obsp)$, for some MCS $\Delta$, $a\in\agt$ and $p\in\prop$. By definition of $V^c$, $\obsp\in\Delta$, and by $\axm{Obs}$, $\obsp\ra\neg\obsnp\in\Delta$. Then, by $\axm{MP}$, $\neg\obsnp\in\Delta$. By classical properties of MCS, $\obsnp\notin\Delta$, which proves the property.
\end{proof}

Now we can link the notion of derivability in $\olaxiom$ and the one of satisfiability (in the canonical model). 

\begin{lemma}[Truth Lemma]\label{lemma:truth}
    Let  $\model^c=\tup{W^c,\{R^c_a\}_{a\in\agt},V^c}$ be the canonical model from Definition~\ref{def:canonical}, and let $\Delta\in W^c$. Then,
    for all formulas $\varphi$ of $\ol$:
    \[
        \varphi\in\Delta \mbox{ if and only if } \model^c,\Delta\models\varphi. 
    \]
\end{lemma}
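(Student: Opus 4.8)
The plan is to prove the Truth Lemma by induction on the structure of the formula $\varphi$ in the language $\ol$ (recall that this fragment excludes dynamic modalities $[\amodel,e]\psi$, so the recursion is over the standard modal cases only). First I would set up the induction: the base cases are propositional symbols $p\in\prop$ and observation atoms $\obsp,\obsnp\in\obsym$. For $p\in\prop$, the equivalence $p\in\Delta \iff \model^c,\Delta\models p$ is immediate from the definition of $V^c$ on $\prop$ in Definition~\ref{def:canonical} together with the atomic clause of Definition~\ref{def:semantics}. The same argument applies verbatim to the observation atoms $\obsp$ and $\obsnp$, since (as the definition of $V^c$ must be read) the valuation sends every atom in $\prop\cup\obsym$ to the set of MCSs containing it; the only subtlety to flag is that $\axm{Obs}$ and Lemma~\ref{lemma:canonical-observ} guarantee this valuation really does land in an observational epistemic model, so there is no clash, but for the Truth Lemma itself the bare definitional unwinding suffices.

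Next I would handle the Boolean cases. For $\varphi=\neg\psi$: using the inductive hypothesis on $\psi$ and the standard properties of maximal consistent sets (namely $\neg\psi\in\Delta \iff \psi\notin\Delta$), we get $\neg\psi\in\Delta \iff \model^c,\Delta\not\models\psi \iff \model^c,\Delta\models\neg\psi$. For $\varphi=\psi_1\wedge\psi_2$: again by MCS properties ($\psi_1\wedge\psi_2\in\Delta \iff \psi_1\in\Delta$ and $\psi_2\in\Delta$) and the two inductive hypotheses, we obtain the desired equivalence. These are entirely routine and I would not belabour them.

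The modal case $\varphi=\B_a\psi$ is the heart of the argument, split into the two directions. For the left-to-right direction: assume $\B_a\psi\in\Delta$ and let $\Delta'$ be any MCS with $\Delta R^c_a\Delta'$; by the very definition of $R^c_a$ this means $\psi\in\Delta'$, so by the inductive hypothesis $\model^c,\Delta'\models\psi$, and since $\Delta'$ was arbitrary among $R^c_a$-successors, $\model^c,\Delta\models\B_a\psi$. The right-to-left direction is the part requiring genuine work (the usual "existence lemma" argument): assume $\B_a\psi\notin\Delta$; I need to produce an MCS $\Delta'$ with $\Delta R^c_a\Delta'$ and $\psi\notin\Delta'$. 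Consider the set $\Gamma=\{\chi \mid \B_a\chi\in\Delta\}\cup\{\neg\psi\}$. One shows $\Gamma$ is consistent: if not, then $\{\chi_1,\dots,\chi_n\}\subseteq\{\chi\mid\B_a\chi\in\Delta\}$ with $\vdash(\chi_1\wedge\cdots\wedge\chi_n)\ra\psi$, and then using $\axm{Nec_\B}$, $\axm{K_\B}$ and propositional reasoning one derives $\B_a\chi_1,\dots,\B_a\chi_n\vdash\B_a\psi$, hence $\B_a\psi\in\Delta$ by maximality, contradicting the assumption. By Lindenbaum's Lemma (Lemma~\ref{lemma:lindenbaum}) extend $\Gamma$ to an MCS $\Delta'$; then $\B_a\chi\in\Delta$ implies $\chi\in\Delta'$ so $\Delta R^c_a\Delta'$, and $\neg\psi\in\Delta'$ forces $\psi\notin\Delta'$, whence $\model^c,\Delta'\not\models\psi$ by the inductive hypothesis, giving $\model^c,\Delta\not\models\B_a\psi$ as required. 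I expect this existence-lemma step — constructing the witnessing successor MCS and verifying its consistency via the normality of $\B_a$ — to be the main obstacle, though it is the textbook argument and can be cited from e.g.~\cite[Section~4.3]{Blackburn&deRijke&Venema01}; everything else is bookkeeping.
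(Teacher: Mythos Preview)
Your proposal is correct and follows exactly the approach the paper sketches: induction on the structure of $\varphi$, with the $\B_a\psi$ case handled via the inductive hypothesis and the definition of $R^c_a$ (your existence-lemma argument simply spells out what the paper leaves implicit). The only minor point worth flagging is that Definition~\ref{def:canonical} literally states $V^c$ only for $p\in\prop$, so you are right to note it must be read as extending over $\prop\cup\obsym$; otherwise nothing to add.
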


The proof of Lemma \ref{lemma:truth} proceeds by induction on the structure of $\varphi$. The case where $\varphi$ is of the form $\B_a\psi$, is obtained using the inductive hypothesis and the definition of $R^c_a$.  Once this property is established, completeness can be proved.

\begin{theorem}\label{th:completeness}
The axiom system $\olaxiom$ is sound and strongly complete with respect to observational epistemic models.
\end{theorem}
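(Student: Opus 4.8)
The plan is to follow the standard canonical model argument for normal modal logics, exactly in the style of~\cite[Chapter~4]{Blackburn&deRijke&Venema01}, since all the essential pieces have already been assembled. Soundness has been argued above: the $\B_a$ axioms are the familiar $\mathcal{KD}45$ axioms and are sound over Euclidean, transitive, serial relations, while $\axm{Obs}$ is sound over observational epistemic models by the second clause of Definition~\ref{def:obs-epis-model}, as noted just before Lemma~\ref{lemma:canonical-observ}. So the work is entirely on the completeness side.

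For strong completeness, I would start from an arbitrary $\olaxiom$-consistent set $\Gamma$ of $\ol$-formulas. By the Lindenbaum Lemma (Lemma~\ref{lemma:lindenbaum}), extend $\Gamma$ to a maximal consistent set $\Gamma'$, which is a world of the canonical model $\model^c$ of Definition~\ref{def:canonical}. By Lemma~\ref{lemma:canonical-observ}, $\model^c$ is an observational epistemic model, so it lies in the class we are completeness-proving against. By the Truth Lemma (Lemma~\ref{lemma:truth}), $\model^c,\Gamma'\models\psi$ for every $\psi\in\Gamma'$, and in particular for every $\psi\in\Gamma$. Hence $\Gamma$ is satisfiable in an observational epistemic model, which is precisely strong completeness: every consistent set is satisfiable, equivalently $\Sigma\models\varphi$ implies $\Sigma\vdash\varphi$.

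The only genuinely non-routine point is the Truth Lemma itself, whose proof is sketched in the excerpt as an induction on the structure of $\varphi$. The propositional cases use the standard MCS properties; the atomic cases for $p$, $\obsp$, $\obsnp$ hold by the definition of $V^c$; the Boolean cases are immediate; and the $\B_a\psi$ case is the usual existence-lemma argument: if $\B_a\psi\notin\Delta$ then the set $\{\chi \mid \B_a\chi\in\Delta\}\cup\{\neg\psi\}$ is consistent (using $\axm{K_\B}$ and $\axm{Nec_\B}$), so it extends to an MCS $\Delta'$ with $\Delta R^c_a \Delta'$ and $\neg\psi\in\Delta'$, whence by the inductive hypothesis $\model^c,\Delta\not\models\B_a\psi$; the converse direction is direct from the definition of $R^c_a$. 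I would present this as a short proof, citing~\cite[Section~4.3]{Blackburn&deRijke&Venema01} for the modal case and spelling out only that the atomic observation symbols are treated exactly as ordinary propositional atoms in $V^c$ (so no extra subtlety arises there, the $\axm{Obs}$ constraint having already been discharged at the level of $\model^c$ being an observational epistemic model).

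I expect the main obstacle to be purely expository rather than mathematical: making sure the reader sees that nothing new is needed beyond the classical template, i.e.\ that adding the observation atoms $\obsp,\obsnp$ and the axiom $\axm{Obs}$ does not disturb the canonical construction because observations are treated as plain atoms whose only constraint ($\obsp\to\neg\obsnp$) is a propositional-style axiom already forced into every MCS, and hence already reflected in $V^c$ and in the fact that $\model^c$ is an observational epistemic model (Lemma~\ref{lemma:canonical-observ}). With that observation made explicit, the theorem follows by assembling Lemmas~\ref{lemma:lindenbaum}, \ref{lemma:canonical-observ} and~\ref{lemma:truth} in the standard way.
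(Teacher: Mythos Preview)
Your proposal is correct and follows exactly the same route as the paper: assemble Lemmas~\ref{lemma:lindenbaum}, \ref{lemma:canonical-observ} and~\ref{lemma:truth} in the standard canonical-model fashion to conclude that any $\olaxiom$-consistent set is satisfiable in an observational epistemic model. Your additional remarks on how the Truth Lemma is proved and why the observation atoms cause no extra difficulty are accurate elaborations of what the paper only sketches.
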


\begin{proof}
Suppose $\Gamma$ is a consistent set of formulas. By Lemma~\ref{lemma:lindenbaum}, there exists a MCS $\Gamma'$ such that $\Gamma\subseteq\Gamma'$. By Lemma~\ref{lemma:truth}, $\model^c,\Gamma'\models\Gamma$. Moreover, by Lemma~\ref{lemma:canonical-observ}, $\model^c$ is an observational epistemic model, which completes the proof.
\end{proof}

Now it is time to consider the full language $\dol$, i.e., to handle dynamic modalities as well. In order to axiomatize $\dol$, it is only needed to show that every formula $[\amodel,e]\varphi$ of $\dol$ can be transformed into another formula $\varphi'$ of $\ol$, i.e., with no occurrences of the dynamic modality $[\amodel,e]$. This transformation is possible since dynamic modalities do not add expressive power to $\ol$ (see e.g.~\cite{DELbook} on this aspect). That being done, we rely on the completeness of $\ol$ to establish the completeness of $\dol$.

We start out by introducing a list of valid formulas, usually called \emph{reduction axioms}, that will be crucial to eliminate dynamic modalities. The next definition follows the presentation of~\cite{DK08}, which also includes postcondition in action models.

\begin{definition}\label{def:redaxioms}
The following equivalences are reduction axioms for eliminating the $[\amodel,e]$
modalities (by applying them on the inner occurrence of a dynamic modality):
\begin{center}
\begin{tabular}{l@{$~\leftrightarrow~$}l}
    $\vdash [\amodel,e] p$ & $(\pre(e) \ra \post(e)(p))$ \\
    $\vdash[\amodel,e]\neg\varphi$ & $(\pre(e) \ra \neg [\amodel,e] \varphi)$ \\
    $\vdash[\amodel,e] (\varphi \wedge \psi)$ & $ ([\amodel,e]\varphi \wedge [\amodel,e] \psi)$ \\
    $\vdash[\amodel,e]\B_a\varphi$ & $(\pre(e)  \ra \bigwedge\set{\B_a[\amodel,f]\varphi \mid e\ra_a f})$ \\
\end{tabular}
\end{center}

\end{definition}

Then, the next lemma (already established in~\cite[Section~2.4]{DK08}) follows. The proof is similar to the proof for classical action models (see e.g.,~\cite[Section~6.6]{DELbook}).

\begin{lemma} \label{lemma:reductionvalid}
The reduction axioms from Definition~\ref{def:redaxioms} are valid.
\end{lemma}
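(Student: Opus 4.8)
The plan is to verify each of the four reduction axioms from Definition~\ref{def:redaxioms} semantically, i.e., to show that for every pointed relational model $\model,w$ the left-hand side and right-hand side are satisfied at exactly the same worlds. The uniform strategy is to unfold the semantics of $[\amodel,e]$ from Definition~\ref{def:semantics} — namely $\model,w\models[\amodel,e]\psi$ iff $\model,w\models\pre(e)$ implies $(\model\otimes\amodel),(w,e)\models\psi$ — and then push the evaluation through the structure of $\psi$ using the definition of the product update $\model\otimes\amodel$ in Definition~\ref{def:product}. Since the dynamic modality is essentially an implication guarded by $\pre(e)$, each case splits on whether $\model,w\models\pre(e)$; when $\pre(e)$ fails both sides are trivially true, so the content is in the case $\model,w\models\pre(e)$, which guarantees $(w,e)\in W'$.

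For the atomic case, I would use that $V'(p)=\{(w,e)\mid \model,w\models\post(e)(p)\}$, so $(\model\otimes\amodel),(w,e)\models p$ iff $\model,w\models\post(e)(p)$; combined with the $\pre(e)$-guard this gives $[\amodel,e]p \lra (\pre(e)\ra\post(e)(p))$. Here one should note that this covers $p\in\prop\cup\obsym$, since observation atoms are treated as ordinary atoms in the valuation. For negation, $\model,w\models[\amodel,e]\neg\varphi$ iff $\pre(e)$ implies $(\model\otimes\amodel),(w,e)\models\neg\varphi$ iff $\pre(e)$ implies $(\model\otimes\amodel),(w,e)\not\models\varphi$, and again using the guard this rearranges to $\pre(e)\ra\neg[\amodel,e]\varphi$ (the guard makes the double negation collapse correctly). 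For conjunction, the equivalence is immediate since implication distributes over conjunction in the consequent and the same guard $\pre(e)$ appears on both conjuncts. For the belief case, I would use the clause for $\B_a$ together with the product relation: $(w,e)R'_a(v,f)$ iff $wR_av$ and $e\ra_af$. So $(\model\otimes\amodel),(w,e)\models\B_a\varphi$ iff for all $v,f$ with $wR_av$ and $e\ra_af$, $(\model\otimes\amodel),(v,f)\models\varphi$; reorganizing the two universal quantifiers, this is: for all $f$ with $e\ra_af$, for all $v$ with $wR_av$, $(\model\otimes\amodel),(v,f)\models\varphi$, i.e. $\model,v\models[\amodel,f]\varphi$ whenever the $\pre(f)$-guard would matter — here one must be slightly careful, because $(v,f)$ lies in $W'$ only if $\model,v\models\pre(f)$, but $[\amodel,f]\varphi$ already builds in exactly that guard, so $\model,v\models[\amodel,f]\varphi$ is equivalent to: $\model,v\models\pre(f)$ implies $(\model\otimes\amodel),(v,f)\models\varphi$. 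Hence $\model,w\models\B_a[\amodel,f]\varphi$ iff for all $v$ with $wR_av$, $(v,f)\in W'$ implies $(\model\otimes\amodel),(v,f)\models\varphi$, and conjoining over all $f$ with $e\ra_af$ recovers the right-hand side, modulo the outer $\pre(e)$-guard.

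The main obstacle — or rather the only subtle point — is the interaction between the $\pre$-guards and the relational clause in the belief case: one has to make sure that the quantification in $(\model\otimes\amodel)$ over successors $(v,f)\in W'$ (which silently incorporates $\model,v\models\pre(f)$) matches the quantification implicit in $\bigwedge\{\B_a[\amodel,f]\varphi \mid e\ra_a f\}$, where the $\pre(f)$ condition is instead absorbed into the semantics of $[\amodel,f]$. Getting this bookkeeping exactly right is where a careful argument is needed; everything else is a routine unfolding of definitions. I would also remark that these reduction axioms are standard for action models with postconditions and that a detailed proof along these lines appears in~\cite{DK08} and~\cite{DELbook}, so in the paper itself it suffices to sketch the atomic and belief cases and cite those references for the full verification.
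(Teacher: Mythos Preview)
Your proposal is correct and follows exactly the standard semantic unfolding argument that the paper has in mind; in fact the paper does not spell out the proof at all but simply states that it ``is similar to the proof for classical action models (see e.g.,~\cite{DELbook})'', deferring to~\cite{DK08} and~\cite{DELbook} just as you suggest in your final remark. Your detailed treatment of the $\pre$-guard bookkeeping in the belief case is more explicit than anything the paper provides, and is precisely the subtlety one would verify when consulting those references.
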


Before going further, let us discuss the properties of observational epistemic models in presence of an update via dynamic modalities. The next lemma states that the class of all observational epistemic models (i.e., those that are Euclidean, transitive and serial) is not closed under the execution of actions. However, if we give up seriality, the system is complete with respect to the obtained class. 

\begin{lemma}\label{lemma:preservation-updates}
Let $\model,w$ be a pointed epistemic observational model, and let $\amodel,e$ be an action model.  Then, 
    \begin{enumerate}
            \item $(\model,w)$ Euclidean, transitive and serial does not imply that $((\model\otimes\amodel),(w,e))$ is serial.  
        \item $(\model,w)$  Euclidean and transitive implies that $((\model\otimes\amodel),(w,e))$ is also Euclidean and  transitive.
    \end{enumerate}
\end{lemma}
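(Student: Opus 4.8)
The strategy is to handle each clause of the lemma via the definition of the product update $\model\otimes\amodel$ from Definition~\ref{def:product}, recalling that $(w,e)R'_a(v,f)$ holds exactly when $wR_av$ and $e\ra_af$, and that by Definition~\ref{def:actionmodels} each $\ra_a$ is itself Euclidean, transitive and serial. For clause~(2), I would argue that both Euclideanity and transitivity are preserved by products of relations of the corresponding type. For transitivity: if $(w,e)R'_a(v,f)$ and $(v,f)R'_a(u,g)$, then $wR_av$, $vR_au$, $e\ra_af$ and $f\ra_ag$, so by transitivity of $R_a$ and of $\ra_a$ we get $wR_au$ and $e\ra_ag$, hence $(w,e)R'_a(u,g)$ provided $(u,g)\in W'$ — which it is, since it already appeared as the second coordinate of an $R'_a$-pair, so $\model,u\models\pre(g)$. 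For Euclideanity the argument is parallel: from $(w,e)R'_a(v,f)$ and $(w,e)R'_a(u,g)$ we extract $wR_av$, $wR_au$, $e\ra_af$, $e\ra_ag$, and apply Euclideanity of $R_a$ and $\ra_a$ to obtain $vR_au$ and $f\ra_ag$, yielding $(v,f)R'_a(u,g)$. I should make the (small) point that all worlds involved lie in $W'$, which is automatic because every world that is the target or source of a surviving $R'_a$-edge satisfies the relevant precondition by construction of $W'$.

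\textbf{Clause~(1): a counterexample.} For the non-preservation of seriality, I would exhibit a concrete pointed observational epistemic model and action model whose product is not serial. The obstacle is purely that the product update can delete worlds: even though $R_a$ is serial in $\model$ (every $w$ has an $R_a$-successor) and $\ra_a$ is serial in $\amodel$, a world $(w,e)\in W'$ may fail to have any successor because all candidate successors $(v,f)$ with $wR_av$ and $e\ra_af$ have $\model,v\not\models\pre(f)$ and so are excluded from $W'$. Concretely, take $\agt=\{a\}$, a model $\model$ with $W=\{w,v\}$, $R_a=\{(w,v),(v,v)\}$ (Euclidean, transitive, serial), and a valuation making some atom $p$ false at $w$ and true at $v$; take an action model $\amodel$ with $E=\{e,f\}$, $\ra_a=\{(e,f),(f,f)\}$ (Euclidean, transitive, serial), $\pre(e)=\top$, $\pre(f)=p$, and trivial postconditions. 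Then $W'$ contains $(w,e)$ (since $\model,w\models\top$) but the only $\ra_a$-successor of $e$ is $f$, and the only $R_a$-successor of $w$ is $v$ with $\model,v\models p$, so $(v,f)\in W'$ — hmm, that one survives; I would instead set $\pre(f)=\neg p$ so that $(v,f)\notin W'$, leaving $(w,e)$ with no $R'_a$-successor at all, witnessing failure of seriality while $\model,w$ and $\amodel,e$ satisfy all the stated properties. (One must double-check the chosen $\model$ is an observational epistemic model, i.e.\ the valuation condition on $\obsp,\obsnp$ holds — trivially true if we put $w,v\notin V(\obsp)\cup V(\obsnp)$.)

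\textbf{Where the difficulty lies.} The preservation half (clause~2) is entirely routine relation-algebra once one notes the $W'$-membership bookkeeping, so the only real content is in clause~(1): choosing the witness so that \emph{all} the hypotheses genuinely hold (in particular that $\model$ is a legitimate observational epistemic model and that $\amodel$ is a legitimate action model with postconditions of the required restricted shape) while the product genuinely loses seriality. I would present the counterexample compactly, perhaps with a small \texttt{tikz} picture of $\model$, $\amodel$ and $\model\otimes\amodel$ side by side, and then simply verify by inspection that $(w,e)$ has empty $R'_a$-image. This also motivates the remark already made in the text that completeness for the dynamic logic is obtained over the class of Euclidean–transitive (not necessarily serial) models.
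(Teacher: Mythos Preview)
Your proposal is correct. The paper states this lemma without proof, so there is no ``paper's own proof'' to compare against; your argument supplies exactly the kind of standard justification one would expect.

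For clause~(2), your relation-algebraic argument is the canonical one: transitivity and Euclideanity are both \emph{pointwise} (Horn) conditions, so they transfer to the componentwise product $R'_a$ immediately from $R_a$ and $\ra_a$. Your remark about $W'$-membership bookkeeping is accurate but essentially automatic, since by construction $R'_a\subseteq W'\times W'$, so any world appearing as an endpoint of an $R'_a$-edge is already in $W'$.

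For clause~(1), the final version of your counterexample (with $\pre(f)=\neg p$) works. To verify once more: with $R_a=\{(w,v),(v,v)\}$ and $\ra_a=\{(e,f),(f,f)\}$, both relations are serial, transitive and Euclidean; with $p$ false at $w$ and true at $v$, we get $W'=\{(w,e),(w,f),(v,e)\}$ since $(v,f)$ is excluded by $\model,v\not\models\neg p$. The only candidate $R'_a$-successor of $(w,e)$ is $(v,f)$, which is absent, so seriality fails at $(w,e)$. The action model you describe meets all the constraints of Definition~\ref{def:actionmodels} (taking the postcondition domain $D=\emptyset$), and the model is observational epistemic once you set all observation atoms to false, as you note.

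One presentational point: in a final write-up you should delete the ``hmm, that one survives'' self-correction and present the working counterexample directly; the exploratory false start does not belong in the finished proof.
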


Thus, as mentioned, an alternative to obtain completeness consists in relaxing the conditions imposed on the models, concerning the seriality condition in particular. 

\begin{definition}\label{def:dol}
    The axiom system $\dolaxiom$ is defined as $\olaxiom$ minus the axiom $\axm{D}_\B$, and extended with the reduction axioms from Definition~\ref{def:redaxioms}, and the following rule for replacement of equivalents $\axm{RE}$:
    \[
       \infer[\mathsf{(RE)}]{\vdash  \varphi(p/\psi) \leftrightarrow \varphi(p/\psi')}{\vdash\psi \leftrightarrow \psi'}
    \]
    where $\varphi(p/\psi)$ is the result of replacing every occurrence of $p$ in $\varphi$ by $\psi$.
\end{definition}

In order to eliminate all the occurrences of a dynamic modality in a formula (via reduction axioms), some form of replacement of equivalent formulas is needed. In this way, it is possible to apply the equivalences from Definition~\ref{def:redaxioms} as a rewriting system, pushing dynamic modalities inside of the formula and finally eliminating them. This is accomplished by the rule $\axm{RE}$.

\begin{theorem}\label{th:dyncomplete}
The system $\dolaxiom$ is sound and strongly complete with respect to the class of relational models that are Euclidean and transitive.
\end{theorem}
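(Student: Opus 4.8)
The plan is to reduce completeness for $\dolaxiom$ to the already-established completeness of $\olaxiom$ (Theorem~\ref{th:completeness}) via the reduction axioms, adjusting for the dropped seriality axiom $\axm{D}_\B$. First I would observe that soundness is immediate: the axioms and rules inherited from $\olaxiom$ minus $\axm{D}_\B$ are sound over the class of Euclidean and transitive relational models (this is standard; $\axm{D}_\B$ was the only axiom needing seriality), the reduction axioms are valid by Lemma~\ref{lemma:reductionvalid}, and the rule $\axm{RE}$ preserves validity because provably equivalent formulas are semantically equivalent and substitution into a context preserves truth (a routine induction on the structure of $\varphi$).

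For completeness, the key step is a \emph{translation} argument. I would define a translation $t:\dol\to\ol$ by recursion, using the reduction axioms from Definition~\ref{def:redaxioms} oriented left-to-right as rewrite rules: $t$ is the identity on atoms, commutes with $\neg$ and $\wedge$ and $\B_a$, and on a formula $[\amodel,e]\varphi$ it pushes the dynamic modality inward according to the matching reduction axiom and then recurses. To see this terminates, I would exhibit a well-founded complexity measure on formulas (e.g. the standard one from~\cite{DELbook}, where the modality $[\amodel,e]$ counts heavily enough that each reduction-axiom rewrite strictly decreases the measure, even though the $\B_a$ clause duplicates the outer modality across the finitely many $\ra_a$-successors of $e$). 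Using $\axm{RE}$ together with the reduction axioms, one shows by induction that $\vdash_{\dolaxiom}\varphi\leftrightarrow t(\varphi)$ for every $\varphi\in\dol$, and by Lemma~\ref{lemma:reductionvalid} (soundness) that $\varphi$ and $t(\varphi)$ are semantically equivalent over Euclidean-transitive models.

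With the translation in hand, completeness follows the usual pattern. Suppose $\varphi$ is valid over the class of Euclidean and transitive relational models. Then $t(\varphi)$ is valid over that class as well. Since $t(\varphi)\in\ol$ and $\ol$ contains no dynamic modalities, I need $t(\varphi)$ to be valid over \emph{observational epistemic models} so as to invoke Theorem~\ref{th:completeness}; but this requires a small bridge, because Theorem~\ref{th:completeness} is stated for the serial class whereas here we work with the merely Euclidean-transitive class. The bridge is that validity of a $\ol$-formula over Euclidean-transitive models implies validity over observational epistemic models (the latter being a subclass), so $t(\varphi)$ is valid over observational epistemic models, hence $\vdash_{\olaxiom} t(\varphi)$ by Theorem~\ref{th:completeness}. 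However $\olaxiom$ uses $\axm{D}_\B$, which $\dolaxiom$ lacks, so I cannot directly conclude $\vdash_{\dolaxiom}t(\varphi)$. The clean fix is to prove completeness of $\dolaxiom$ \emph{directly} by a canonical-model construction for the Euclidean-transitive class: rerun Definitions~\ref{def:canonical}--\ref{def:truth} and Lemmas~\ref{lemma:lindenbaum}--\ref{lemma:truth} using $\dolaxiom$-MCSs, noting that $\axm{4}_\B$ and $\axm{5}_\B$ alone force $R^c_a$ to be transitive and Euclidean (seriality is simply not claimed), that $\axm{Obs}$ still gives the observation-consistency condition exactly as in Lemma~\ref{lemma:canonical-observ}, and that the Truth Lemma goes through for $\ol$-formulas as before. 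Then for an arbitrary $\dolaxiom$-consistent $\Gamma$: apply $t$ pointwise — wait, more carefully, extend $\Gamma$ to an MCS $\Gamma'$, which by the direct canonical construction is satisfiable at $\model^c,\Gamma'$ over an Euclidean-transitive (observational-flavoured) model; since $\vdash_{\dolaxiom}\psi\leftrightarrow t(\psi)$ for all $\psi$ and the Truth Lemma handles $\ol$-formulas, a standard induction lifting the Truth Lemma to all of $\dol$ via the reduction axioms gives $\model^c,\Gamma'\models\Gamma$.

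The main obstacle is the mismatch between the axiom sets and the model classes: $\olaxiom$ includes $\axm{D}_\B$ and is complete for serial models, but Lemma~\ref{lemma:preservation-updates}(1) shows seriality is destroyed by updates, so $\dolaxiom$ deliberately omits $\axm{D}_\B$ and targets the Euclidean-transitive class. Getting the reduction-to-$\ol$ argument to respect this requires either (a) proving a separate completeness theorem for ``$\olaxiom$ minus $\axm{D}_\B$'' over Euclidean-transitive models — which is itself an easy canonical-model exercise since dropping $\axm{D}_\B$ just drops the seriality requirement — and then bootstrapping to $\dolaxiom$ via the translation, or (b) running the canonical construction for $\dolaxiom$ directly. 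I would take route (b) as the cleanest, so that the only genuinely new ingredients beyond what the excerpt already proves are: the termination of the translation $t$, the provable equivalence $\vdash_{\dolaxiom}\varphi\leftrightarrow t(\varphi)$ (which is where $\axm{RE}$ earns its keep), and the observation that $\axm{4}_\B,\axm{5}_\B$ suffice for the canonical relation's frame properties without $\axm{D}_\B$. Everything else — Lindenbaum, the Truth Lemma for $\ol$, the observation-consistency of $V^c$ — transfers verbatim from the $\ol$ case.
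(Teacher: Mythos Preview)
Your proposal is correct and follows essentially the same route as the paper: reduce $\dol$-formulas to $\ol$-formulas via the reduction axioms (using $\axm{RE}$), then invoke canonical-model completeness for the static fragment over the Euclidean--transitive class. You are in fact more careful than the paper on one point: the paper's proof simply cites Theorem~\ref{th:completeness} together with Lemma~\ref{lemma:preservation-updates} and moves on, without remarking that Theorem~\ref{th:completeness} is stated for $\olaxiom$ \emph{with} $\axm{D}_\B$ over the \emph{serial} class, whereas here one needs the analogous statement for $\olaxiom$ minus $\axm{D}_\B$ over the merely Euclidean--transitive class. Your routes (a) and (b) both close this gap cleanly; the paper presumably intends the same re-run of the canonical construction with $\axm{D}_\B$ dropped (since $\axm{4}_\B$ and $\axm{5}_\B$ alone yield transitivity and Euclideanity of $R^c_a$), but leaves this implicit.
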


\begin{proof}
First, thanks to Lemma~\ref{lemma:reductionvalid} (using the rule $\axm{RE}$) we get that every formula of $\dol$ is provable (in $\dolaxiom$) equivalent to one of $\ol$. Thus, by Proposition~\ref{th:completeness} and Lemma~\ref{lemma:preservation-updates}, completeness over Euclidean and transitive relational models follows.
\end{proof}

Based on the technical issues described above (i.e., the loss of seriality after updating a model), we obtain some undesired conceptual behaviour. If some agent $a$ believes that $\varphi$ in a serial model (i.e., in which the current state has at least a successor via $a$), naturally it is not the case that the agent also believes $\neg\varphi$. This is due to the fact that some successor of the current point exists, and since all successors must 
satisfy $\varphi$, it cannot be the case that they also satisfy $\neg\varphi$. However, after updating a model via an action that does not preserve seriality (i.e., that removes all successors of the current state), the agent believes both $\varphi$ and $\neg\varphi$, as not having successors trivializes what the agent believes. Therefore, the agent's beliefs may become \emph{inconsistent}. Although this is an undesirable issue, it is a standard characteristic in dynamic epistemic logic, and thus we will not discuss this limitation any further here and choose to proceed with our framework despite it. 


\section{Misdirection via Action Types}
\label{sec:types}

We introduce a simple fragment of $\dol$ that is enough to characterize the intended actions of misdirection, verbal and visual. Based on the notion of \emph{action types} from~\cite{Liu&Wang2013}, this fragment allows only four kinds of \emph{action types} in dynamic modalities, that are either verbal or visual, and either truthful or untruthful in both cases.\footnote{More colloquial terms for truthful and untruthful \emph{visual} actions would be “genuine” and “bogus” visual actions.} 

\begin{remark}\label{remark:pictures}
Both observational epistemic models and action models are labelled directed graphs, with nodes representing states, or worlds, in the former and atomic actions in the latter. To model the distinction graphically, we use circles to represent worlds of observational epistemic models, and squares to represent atomic actions of action models (as in Figures~\ref{fig:modtypestell} and~\ref{fig:modtypesshow}). We use double lines to point out the actual world, both in the observational epistemic models and in action models. 

In observational epistemic models, when an atomic symbol $p$ (from $\prop\cup\obsym$) appears inside a circle representing a world $w$, it indicates that $p$ holds at $w$. If the symbol $p$ is not inside of a world $v$, it means that such a symbol does not hold at $v$. 

In action models, while preconditions are explicitly displayed, postconditions are expressed as the conjuntion of the atomic symbols that are updated, meaning that the symbols become true if they appear in the affirmative form, and become false if they appear in the negated form. For the rest of the symbols that do not appear explicitly, the postcondition acts as the identity function. 
\end{remark}

\subsection{Defining Action Types}
\label{subsec:actiontypes}

Let us define the language $\tslogic$ as the fragment of $\dol$ where dynamic modalities are 
restricted to one of the following action types: {\bf (1)} $\tella^+_a(\varphi)$, {\bf (2)} $\tella^-_a(\varphi)$, {\bf (3)}~$\showa^+_a(\psi)$, or 
{\bf (4)}  $\showa^-_a(\psi)$, with $\varphi$ an arbitrary $\tslogic$-formula, and $\psi=\ell_1 \wedge\dots\ell_n$, where each $\ell_i \in\lit$, and $a\in\agt$. Notice that for visual actions we only admit the action of showing a conjunction of literals. In what follows we say that $\ell\in\psi$, if $\ell=\ell_i$, for some $1\leq i \leq n$. On the one hand, action type {\bf (1)} indicates that agent $a$ truthfully announces formula $\varphi$, whereas {\bf (2)} indicates that $a$ untruthfully announces that $\varphi$. 
On the other hand, in {\bf (3)}, agent $a$ executes a genuine visual action which aims addressees to observe every $\ell\in \psi$ when every $\ell$ actually holds, whereas in {\bf (4)}, agent $a$ executes a bogus visual action which aims addressees to observe that every $\ell\in\psi$ holds when none of them actually hold. As one can  see, definitions of verbal and visual actions are not entirely symmetric since each has its own particularities. For instance, they differ regarding the kind of formulas the action applies to, with verbal announcements applying to arbitrary formulas and visual actions applying only to (conjunctions of) propositional literals. This is due to the fact that our framework focuses on propositional observation via the symbols in $\obsym$. We introduce the formal definitions below.

\begin{definition}[Verbal action types]
    \label{def:verbal-types}
    Let $\varphi$ be a $\dol$ formula, $a\in\agt$ and let $B=\agt\setminus\{a\}$. The action type $\tella^+_a(\varphi)= (\amodel,e)$, where $\amodel=\tup{E,\{\ra_a\}_{a\in\agt},\pre,\post}$, is defined by:
    \begin{itemize}
        \item $E:=\{e,f\}$,
        \item $\ra_a:= E\times E$,
        \item $\ra_b := \{(e,f),(f,f)\}$, for all $b\in B$,
        \item $\pre(e) := \B_a\varphi$, and $\pre(f):= \varphi \wedge \B_a\varphi$, and
        \item $\post(g)(p) := p$, for all $g\in E$ and $p\in \prop\cup\obsym$.
    \end{itemize}
    The action type $\tella^-_a(\varphi)$ is defined exactly as $\tella^+_a(\varphi)$, except that: 
    \begin{itemize}
        \item $\pre(e) := \B_a\neg\varphi$.
    \end{itemize}
\end{definition}

Each of the action types in Definitions~\ref{def:verbal-types} and~\ref{def:visual-types} is a pointed action model $(\amodel,e)$ such that some agent $a\in\agt$ performs an informational action to the other agents (with $e$ being some of its actions). 
Verbal actions update the accessibility relation of agent(s)~$b$, more precisely, it restricts $b$'s accessibilities to those states where formula $\varphi$ holds, while maintaining agent's $a$ accessibilities, and thus her beliefs, as they were. The action type $\tella^+(\varphi)$ corresponds to truthful announcement and is fairly standard, i.e., $a$ believes that $\varphi$ is the case and $a$'s truthful announcement that $\varphi$ restricts $b$'s accessibilities to states where~$\varphi$ holds, such that $b$ also believes that $\varphi$. Its untruthful counterpart $\tella^-(\varphi)$ (inspired by e.g.~\cite{Vanditmarsch2014}) differs from  $\tella^+(\varphi)$ in that $b$'s accessibilities are still restricted to states where $\varphi$ holds but now agent~$a$ (a.k.a., the performer) believes that $\neg\varphi$ holds, contrary to what she announces to $b$. Let us now consider the action types corresponding to visual events. 

\begin{definition}[Visual action types]
\label{def:visual-types}
    Let $\psi=\ell_1\wedge\ldots\wedge\ell_n$ be a conjunction of literals, $a\in\agt$ and let $B=\agt\setminus\{a\}$. The action type $\showa^+_a(\psi)=(\amodel,e)$, where $\amodel=\tup{E,\{\ra_a\}_{a\in\agt},\pre,\post}$, is defined by:
    \begin{itemize}
        \item $E:=\{e,f\}$,
        \item $\ra_a:= E\times E$,
        \item $\ra_b := \{(e,f),(f,f)\}$, for all $b\in B$,
        \item $\pre(e) := \bigwedge_{\ell\in\psi}(\ell \wedge \obs{a}{\ell})$, and $\pre(f):= \bigwedge_{\ell\in\psi} \neg\obs{a}\neg\ell$, and
    \end{itemize}
   \quad  for all $g\in E$,
    \begin{itemize}
        \item $\post(g)(p) := \top$, for all $p\in \{\ell,\obs{b}\ell \mid \ell\in\psi \}$, 
        \item $\post(g)(q) := \bot$, for all $q\in \{\obs{b}\neg\ell \mid \ell\in\psi \}$, and 
        \item $\post(g)(r) := r$, for all $r\in(\prop\cup\obsym)\setminus \{\ell,\obs{b}\ell,\obs{b}\neg\ell \mid \ell \in \psi \}$.
    \end{itemize}
The action type $\showa^-_a(\psi)$ is defined exactly as $\showa^+_a(\psi)$, except that: 
    \begin{itemize}
         \item $\ra_a := \{(e,e),(f,f), (f,e)\}$, 
        \item $\ra_b := \{(e,f), (f,f)\}$, for all $b\in B$.
        \item $\pre(e) := \bigwedge_{\neg\ell\in\psi}(\ell \wedge \obs{a}{\neg\ell})$, and $\pre(f):= \bigwedge_{\ell\in\psi} \neg\obs{b}\neg\ell$, and
        \item $\post(e)(p) := \top$, for all $p\in \{\obs{b}\ell \mid \ell\in\psi \}$,
        \item $\post(e)(q) := \bot$, for all $q\in \{\ell, \obs{b}\neg\ell \mid \ell\in\psi \}$, and 
        \item $\post(e)(r) := r$, for all $r\in(\prop\cup\obsym)\setminus \{\ell,\obs{b}\ell,\obs{b}\neg\ell \mid \ell \in \psi \}$.
        \item $\post(f)(p) := \top$, for all $p\in \{\ell,\obs{a}\ell,\obs{b}\ell \mid \ell\in\psi \}$,
        \item $\post(f)(q) := \bot$, for all $q\in \{\obs{b}\neg\ell, \obs{a}\neg\ell \mid \ell\in\psi \}$, and 
        \item $\post(f)(r) := r$, for all $r\in(\prop\cup\obsym)\setminus \{\ell,\obs{b}\ell,\obs{b}\neg\ell,\obs{a}\ell, \obs{a}\neg\ell \mid \ell \in \psi \}$.
    \end{itemize}
\end{definition}

The first distinction between verbal and visual actions is the use of postconditions for visual actions (also known as `effects', related to the public assignments of~\cite{DitmarschHL12}). This is due to the fact that observations are handled atomically, so in order to update visual perception of agents, we have to update the truth values of the corresponding atomic symbols. The underlying action model in $\showa^+_a(\psi)$ contains two actions. The ``actual'' action to be executed (named $e$) requires as a precondition that all the literals in $\psi$ be true, and leads to update $b$'s observations (for all $b\in B$) to observe the correct facts stated by $\psi$, and to not observe the opposite facts. The latter helps us to stay in the proper class of models after the update. The other possible action ($f$) only requires $a$ to not observe the opposite of $\psi$, and updates $b$'s observations as in~$e$. It is worth noting that the accessibility relation in the announcement says that~$a$ may be uncertain about her own observations, while for $b$ it is established that $a$ does not observe the opposite to what she announces (which is enough for truthfully observing~$\psi$). 

The case of $\showa_a^-(\psi)$ also deserves some explanation. The precondition of the actual action $e$ now states that $\psi$ is entirely false, and that agent $a$ observes that $\psi$ is false. Its postcondition updates $b$'s observation, leading her to wrongly observe $\psi$ as being true. In that respect, action $f$ only requires $b$ not to be able to observe the actual falsity of $\psi$, and to update her observations with $\psi$, making them wrong. Notice that those precondition and postcondition do not involve any doxastic notion of belief or uncertainty: accurate versus inaccurate notions of observation involved are strictly propositional, in the sense of Definition~\ref{def:syntax}. Moreover, while in  $\showa^+_a(\psi)$ there is an edge labelled with~$a$ between $e$ and $f$, this is not the case in  $\showa^-_a(\psi)$. The reason is that in the former, $\pre(f)$ involves $a$'s observations (as $\psi$ is a truthful announcement) while in the latter does not. 
%

\begin{figure}[t]
\begin{tabular}{l@{\quad\quad\quad\quad}llllll}
& & & & &   \begin{tikzpicture}[->]
            \node [] (t) {$\tella^+_a(\varphi)$}; 
            \node [below of = t] { };  
    \end{tikzpicture} 
&
    \begin{tikzpicture}[->]
        \node [state, double, label = {[label-state]left:$e$}] (w1) {$\pre=\B_a\varphi$};      
        \node [state, right = of w1, label = {[label-state]right:$f$}] (w2)
        {$\pre=\varphi\wedge\B_a\varphi$};
        \path (w1) edge[bend left] node [label-edge, above] {\footnotesize{$a,B$}} (w2);
        \path [->] (w2) edge [loop above] node {\footnotesize{$a,B$}} (w2);
        \path (w2) edge[bend left] node [label-edge, below] {\footnotesize{$a$}} (w1);
        \path [->] (w1) edge [loop above] node {\footnotesize{$a$}} (w1);
    \end{tikzpicture}
    \\
    
& & & & &  \begin{tikzpicture}[->]
            \node [] (t) {$\tella^-_a(\varphi)$}; 
            \node [below of = t] { };  
    \end{tikzpicture} 
  &  \begin{tikzpicture}[->]
        \node [state, double, label = {[label-state]left:$e$}] (w1) {$\pre=\B_a\neg\varphi$};  
        \node [state, right = of w1, label = {[label-state]right:$f$}] (w2) {
        $\pre=\varphi\wedge \B_a\varphi$
        };
        \path (w1) edge[bend left] node [label-edge, above] {\footnotesize{$a,B$}} (w2);
        \path [->] (w2) edge [loop above] node {\footnotesize{$a,B$}} (w2);
        \path (w2) edge[bend left] node [label-edge, below] {\footnotesize{$a$}} (w1);
        \path [->] (w1) edge [loop above] node {\footnotesize{$a$}} (w1);
    \end{tikzpicture}
     
\end{tabular} 
\caption{Types of Action Models Based on Distinct Modal Preconditions.} 
\label{fig:modtypestell}
\end{figure}



 

\begin{center}

\begin{figure}[t]
\begin{tabular}{l}
        \begin{tikzpicture}[->]
        \node [state, double, label = {[label-state]left:$e$}] (w1) {
        \begin{tabular}{l@{\,\,}l}
           $\pre=$  & $\bigwedge_{\ell\in \psi} \ell \wedge \obs{a}{\ell}$  \\
           \hline
           $\post=$ &  $\bigwedge_{\ell \in \psi} \ell \wedge \obs{b}{\ell} \wedge \neg\obs{b}{\neg\ell}$ 
         \end{tabular}
        };      
        \node [state, right = of w1, label = {[label-state]right:$f$}] (w2) {
         \begin{tabular}{l@{\,\,}l}
            $\pre=$ & $\bigwedge_{\ell \in \psi}\neg\obs{a}{\neg\ell}$ \\ 
            \hline 
            $\post=$ & $\bigwedge_{\ell \in \psi} \ell \wedge \obs{b}{\ell} \wedge \neg\obs{b}{\neg\ell}$
         \end{tabular}
         };
         \node [left =0.5cm of w1] (t) {$\showa^+_a(\psi)$};
        \path (w1) edge[bend left] node [label-edge, above] {\footnotesize{$a,B$}} (w2);
        \path [->] (w2) edge [loop above] node {\footnotesize{$a,B$}} (w2);
        \path (w2) edge[bend left] node [label-edge, below] {\footnotesize{$a$}} (w1);
        \path [->] (w1) edge [loop above] node {\footnotesize{$a$}} (w1);
    \end{tikzpicture}
    \\
    \begin{tikzpicture}[->]
        \node [state, double, label = {[label-state]left:$e$}] (w1) {         \begin{tabular}{l@{\,\,}l}
            $\pre=$ & $\bigwedge_{\ell \in \psi} \neg \ell \wedge \obs{a}{\neg \ell}$ \\
            \hline 
            $\post=$ & $\bigwedge_{\ell \in \psi} \neg \ell \wedge \obs{b}{\ell} \wedge \neg\obs{b}{\neg\ell}$
         \end{tabular}
         };
         \node [left =0.5cm of w1] (t) {$\showa^-_a(\psi)$}; 
        \node [state, right = of w1, label = {[label-state]right:$f$}] (w2) {
        \begin{tabular}{l@{\,\,}l}
            $\pre=$ & $\bigwedge_{\ell \in \psi}\neg\obs{b}{\neg\ell}$ \\
            \hline 
            $\post=$ & $\bigwedge_{\ell \in \psi} \ell \wedge \obs{a}{\ell} \wedge \obs{b}{\ell}$ \\
                     & \quad \quad $\wedge \neg\obs{a}{\neg\ell} \wedge \neg \obs{b}{\neg\ell}$
         \end{tabular}
        };
        \path (w1) edge[bend left] node [label-edge, above] {\footnotesize{$B$}} (w2);
        \path [->] (w2) edge [loop above] node {\footnotesize{$a,B$}} (w2);
        \path (w2) edge[bend left] node [label-edge, below] {\footnotesize{$a$}} (w1);
        \path [->] (w1) edge [loop above] node {\footnotesize{$a$}} (w1);
    \end{tikzpicture}
     
\end{tabular} 
\caption{Types of Action Models Based on Distinct Observation-Based Preconditions.} 
\label{fig:modtypesshow}
\end{figure}
\end{center}

The action types just defined are displayed in Figures~\ref{fig:modtypestell} and~\ref{fig:modtypesshow}. In what follows, let $a\in\agt$ and denote $B=\agt\setminus\{a\}$. We consider $\psi$ to be a conjunction of literals and $\varphi$ to be a $\tslogic$-formula. For simplicity's sake, we write only $\obsbp$ and $\obsbnp$ to indicate that the proposition $p$, respectively $\neg p$, holds for every $b\in B$. 
In verbal action types (see Figure~\ref{fig:modtypestell}), postconditions are the identity function, i.e., the action does not change the truth value of propositional symbols. In visual action types (see Figure~\ref{fig:modtypesshow}), we update the value of propositional symbols involved in the visual action, as observation is handled at the level of literals (both from $\prop$ and $\obsym$). 
Example~\ref{ex:showing} illustrates the effects of the action types we defined.

\begin{example}\label{ex:showing}
Consider the  observational epistemic model displayed below on the left, in which agent $a$ is certain about observing $p$, but agent $b$ is uncertain whether she observes $p$ or $\neg p$. 
Agent $b$ is also uncertain about what agent $a$ observes, i.e., whether agent $a$ observes $p$ or $\neg p$. Moreover, agent $b$ considers possible to observe $\neg p$. 
The actual situation is the following one in which $p$ holds:

    \begin{center}
    \begin{tabular}{ll}
    \begin{minipage}{0.4\textwidth}
    \begin{tikzpicture}[->]
        \node [state, double, circle, label = {[label-state]left:$w$}] (w1) {$p,\obs{a}{p}$};    \path [->] (w1) edge [loop above] node {\footnotesize{$a,b$}} (w1);  
        \node [state, circle, right = of w1,label = {[label-state]right:$v$}] (w2) {$\begin{array}{c}\obs{a}{\neg p}, \\ \obs{b}{\neg p} \end{array}$ };
        \path (w1) edge [bend left] node [label-edge, above] {$b$} (w2);
        \path (w2) edge [bend left] node [label-edge, below] {$b$} (w1);
        \path [->] (w2) edge  [loop above] node {\footnotesize{$a,b$}} (w2);
    \end{tikzpicture}
    \end{minipage}
    &
    \begin{minipage}{0.5\textwidth}
            \begin{tikzpicture}[->]
        \node [state, double, circle, label = {[label-state]left:$(w,e)$}] (w1) {$\begin{array}{c} p, \obs{a}{p} \\ \obs{b}{\neg p}\end{array}$};    \path [->] (w1) edge  [loop above] node {\footnotesize{$a$}} (w1);  
        \node [state, circle, right = of w1,label = {[label-state]right:$(w,f)$}] (w2) {$\begin{array}{c}\obs{a}{\neg p} \\ \obs{b}{\neg p}\end{array}$};
        \node [state, circle, below = of w2,label = {[label-state]right:$(v,f)$}] (w3) {$\begin{array}{c} \obs{a}{\neg p} \\ \obs{b}{\neg p}\end{array}$};

        \path (w1) edge [bend left] node [label-edge, above] {$b$} (w2);                \path (w1) edge [bend right] node [label-edge, above] {$b$} (w3);
        \path (w2) edge [bend left] node [label-edge, above] {$a$} (w1);
        \path (w2) edge [bend left] node [label-edge, right] {$b$} (w3);                \path (w3) edge [bend left] node [label-edge, left] {$b$} (w2);

        \path [->] (w2) edge [loop above] node {\footnotesize{$a$}} (w2);
        \path [->] (w3) edge [loop below] node {\footnotesize{$a,b$}} (w23);
    \end{tikzpicture}
    \end{minipage}
    \end{tabular}
    \end{center}
Now suppose that the dynamic action $\showa^-_{a}(\neg p)$ is executed, i.e., that agent $a$ misdirects agent~$b$'s visual perception, leading her to wrongly observe that $\neg p$ is happening. The resulting observational epistemic model is the one on the right in the figure above.
    Thus, we can conclude that:
    \[
    \model,w\models\obs{a}{p}\wedge\neg\obs{b}{\neg p} \wedge [\showa^-_a(\neg p)](p\wedge \obs{b}{\neg p}\wedge \B_b \obs{a}{\neg p}). 
    \]
    
    Intuitively, this means that agent $a$ successfully misdirects agent's $b$ attention to~$\neg p$: after $a$ performs the action of showing $\neg p$, agent $b$ (wrongly) observes $\neg p$ and believes that agent~$a$ also observes $\neg p$. Notice that we can also replace $[\showa^-_a(\neg p)]$ by $\tup{\showa^-_a(\neg p)}$, as the precondition of the action holds at the current state. If we want to force the action to be executed only in this case, the form $\tup{\showa^-_a(\neg p)}$ should be used.
\end{example}

\begin{remark} In this paper, we focus on what the spectators believe and observe about the magician's beliefs and observations. So far, we have not discussed the status of other second-order beliefs, such as what one spectator believes about another spectator's beliefs. Note that $\tella$ actions are, in fact, a particular form of semi-private announcements (see, e.g.,~\cite[Chapter 6]{DELbook}); therefore, everything announced to a set of agents (here, the spectators) becomes common knowledge.\footnote{We use common knowledge' because it is the standard terminology in Epistemic Logic, although, strictly speaking, we are concerned with beliefs here.} Hence, after the $\tella$ actions, if $\B_b\B_a\varphi$ (with $b$ a spectator and $a$ the magician), then $\B_c\B_b\B_a\varphi$ (where $c$ is also a spectator).
\end{remark}

In Section~\ref{sec:other}, we will discuss how the previously defined action types enable us to characterize concepts such as simulation of the false and dissimulation of the truth. Moreover, we will discuss the logical implications of such formalizations.

\subsection{Misdirection in the French Drop trick}
\label{subsec:example}

With the machinery presented above, we can now formalize the French Drop trick as an illustration of the dynamic logic of misdirection $\dol$. Informally, the French Drop trick can be decomposed into three distinct steps:

\begin{enumerate}

\item First, there is the \emph{\textbf{opening}} of the trick: the magician shows a coin to the public, shakes her hands and states that the coin is in one of her two hands, right or left. 

\item Second, there is the \emph{\textbf{method}} used by the magician, namely the actions she performs in front of the audience, and how those actions result in misdirection. On the one hand, there is an action of \emph{simulation} through a fake passing of the coin. The magician seems to move the coin to her right hand but in fact the coin stays in her left hand. On the other hand, at the same time, there is an action of \emph{dissimulation} through a genuine grabbing. The magician conceals the coin in her left hand by palming it. In the current section, we only model the action of simulation that takes place during the French Drop since, as we will show in subsection \ref{subsubsec:derived}, dissimulation is a derived notion defined as the absence of simulation, verbal or visual. That being said, the magician can reinforce her effect by also making verbal announcements when passing and palming the coin, though not necessarily.

\item Finally, there is an \emph{\textbf{effect}} produced by actions of simulation (\textit{fake passing}) and dissimulation (\textit{genuine palming}). The magician opens her right hand, and thus shows to the public that the coin is not in there, to the great surprise of the crowd.    
\end{enumerate}

 Based on this informal description, we now model the French Drop trick in the fragment $\tslogic$. The three steps of the magic trick (opening, method and effect) are called the initial state, the epistemic action and the posterior state in the modelling herein.   

\begin{itemize}
\item \textbf{Initial state:} we have two propositional symbols: $l$ that stands for \emph{“the coin is in the magician's left hand''}, and $r$ for \emph{“the coin is in the magician's right hand''}. Accordingly, we have four possible states, but the state in which the coin is at the same time in the magician's left and right hand is not actually plausible.  
Precisely, we have three options: either the coin is in the left hand ($l \wedge \neg r$), the coin is in the right hand ($\neg l \wedge r$), or there is no coin in the magician's hands ($\neg l \wedge \neg r$). The actual state is the one in which the coin is in the magician's left hand: ($l \wedge \neg r$). The following pointed observational epistemic model $\model,w$ models this opening situation:%
    \begin{center}
    \begin{tikzpicture}[->]
        \node [state, double, circle, label = {[label-state]below:\footnotesize$w$}] (w1) {$\begin{array}{c} l, \obs{a}{l}, \\ \obs{a}{\neg r}\end{array}$};    \path [->] (w1) edge  [loop left] node {\footnotesize{$a,b$}} (w1);  
        \node [state, circle,  above right = 2.5em and 5em of w1, label = {[label-state]above:\footnotesize$v$}] (w2) {$\begin{array}{c} r, \obs{a}{r} \\ \obs{a}{\neg l}\end{array}$};
        \node [state, circle,  below right = 2.5em and 5em of w1, label = {[label-state]below:\footnotesize$u$}] (w3) {$\begin{array}{c} \obs{a}{\neg r}, \\ \obs{a}{\neg l}\end{array}$};

        \path (w1) edge [bend left] node [label-edge, above] {\footnotesize$b$} (w2);
        \path (w1) edge [bend right] node [label-edge, above] {\footnotesize$b$} (w3);
        \path (w2) edge [bend right,in=198] node [label-edge, left] {\footnotesize$b$} (w3);
        \path (w3) edge [bend right] node [label-edge, right] {\footnotesize$b$} (w2);

        \path (w3) edge [bend right] node [label-edge, below] {\footnotesize$b$} (w1);
        \path (w2) edge [bend left,out=380] node [label-edge, above] {\footnotesize$b$} (w1);
        
        \path [->] (w2) edge [loop right] node {\footnotesize{$a,b$}} (w2);
        \path [->] (w3) edge  [loop right] node {\footnotesize{$a,b$}} (w3);

    \end{tikzpicture}
    \end{center}%
Initially, only the magician believes that the coin is in her left hand. Moreover, on each possible situation, the magician, here represented by agent $a$, observes whether the coin is in a given hand and not in the other hand. For instance, in the actual state~$w$, she observes that the coin is in her left hand, and that it is not in her right hand:~($l \wedge \neg r$). But the audience is uncertain whether the coin is in the magician's left or right hand, and has not visual information. Let $a$ be the agent modeling the magician, and w.l.o.g., let us assume there is only one agent $b$ in the audience.

\item \textbf{Visual actions:} we know that the magic actions are two-sided in the French Drop: on the one hand, it consists in pretending that $r$ happens while, on the other hand, it consists in hiding that $l$ remains the case. As indicated, we only model simulation (\textit{fake passing}) here. This corresponds to action type $\showa^-_a(r\wedge \neg l)$ and is illustrated below. With such an action, the magician aims at making the audience believe that the event $r$ is taking place, when in fact, the magician herself believes and observes that it is not the case ($\B_a\neg r \wedge \obs{a}\neg r$). Also, it is obvious that the magician clearly observes and believes that the coin is still in her left hand ($\B_a l \wedge \obs{a}l$), as she is observing the real situation.
    \begin{center}
    \begin{tikzpicture}[->]
        \node [state, double, label = {[label-state]left:\footnotesize$e$}] (w1) {         \begin{tabular}{l@{\,\,}l}
            $\pre=$ & $ \neg r \wedge l \wedge \obs{a}{\neg r} \wedge \obs{a}{l}$ \\
            \hline 
            $\post=$ & $ \neg r \wedge l \wedge \obs{b}{r} \wedge \obs{b}{\neg l}$ \\
                     & \quad \ $\wedge \neg\obs{b}{\neg r} \wedge \neg \obs{b}{l}$
         \end{tabular}
         };
        \node [state, right = of w1, label = {[label-state]right:\footnotesize$f$}] (w2) {
        \begin{tabular}{l@{\,\,}l}
            $\pre=$ & $\neg\obs{b}{\neg r} \wedge \neg\obs{b}{l}$ \\
            \hline 
            $\post=$ & \ \ \ \ $r \wedge \neg l$ \\
                    & \ \ $\wedge\obs{a}{r}\wedge \obs{a}{\neg l}\wedge \obs{b}{r} \wedge \obs{b}{\neg l}$ \\
                    &  \ \  $\wedge \neg\obs{a}{r} \wedge \neg \obs{a}{\neg l}\wedge \neg\obs{b}{\neg r} \wedge \neg \obs{b}{l}$
         \end{tabular}
        };
        \path (w1) edge[bend left] node [label-edge, above] {\footnotesize{$b$}} (w2);
        \path [->] (w2) edge [loop above] node {\footnotesize{$a,b$}} (w2);
        \path (w2) edge[bend left] node [label-edge, below] {\footnotesize$a$} (w1);
        \path [->] (w1) edge [loop above] node {\footnotesize{$a$}} (w1);
    \end{tikzpicture}
    \end{center}
    Below we depict the model obtained after executing the action $\showa^-_a(r\wedge\neg l)$ in the opening scenario (here states $(v,e)$ and $(u,e)$ are ignored in the picture since $\model,v\not\models\pre(e)$ and $\model,u\not\models\pre(e)$):
    
    \begin{center}
    \begin{tikzpicture}[->]
        \node [state, double, circle, label = {[label-state]above:\footnotesize$(w,e)$}] (w1) {$\begin{array}{c} l, \obs{a}{l}, \\ \obs{a}{\neg r}, \obs{b}{r} \\ \obs{b}{\neg l} \end{array}$};    
        
         \node [state, circle,  below = 3em of w1, label = {[label-state]below:\footnotesize$(v,f)$}] (w3) 
         {$\begin{array}{c} r, \obs{a}{\neg l}, \obs{a}{r},  \\ \obs{b}{\neg l}, \obs{b}{r}  \end{array}$};

        \node [state, circle,  left  = 3em of w3, label = {[label-state]left:\footnotesize$(w,f)$}] (w2) 
        {$\begin{array}{c} r, \obs{a}{\neg l}, \obs{a}{r},  \\ \obs{b}{\neg l}, \obs{b}{r}  \end{array}$};

          \node [state, circle,  right= 3em of w3, label = {[label-state]right:\footnotesize$(u,f)$}] (w4) 
          {$\begin{array}{c} r, \obs{a}{\neg l}, \obs{a}{r},  \\ \obs{b}{\neg l}, \obs{b}{r}  \end{array}$};
        
         \path (w1) edge [bend right]  node [label-edge, above] {\footnotesize{$b$}} (w2);
         \path (w1) edge [bend right] node [label-edge, left] {\footnotesize{$b$}} (w3);
         \path (w1) edge [bend left] node [label-edge, above] {\footnotesize{$b$}} (w4);
         
         \path (w2) edge [bend left] node [label-edge, above] {\footnotesize{$b$}} (w3);
         \path (w3) edge [bend left] node [label-edge, below] {\footnotesize{$b$}} (w2);
         \path (w3) edge [bend left] node [label-edge, above] {\footnotesize{$b$}} (w4);
         \path (w4) edge [bend left] node [label-edge, below] {\footnotesize{$b$}} (w3);

           \path  [->, every arrow/.style={looseness=10}]  (w1) edge [in=140,out=160,loop] node [label=left: \footnotesize{$a$}] {} (w1);  
      
         \path [->, every arrow/.style={looseness=8}]  (w2) edge [in=-120,out=-60] node [label-edge, below] {\footnotesize{$b$}} (w4);

         \path [->, every /.style={looseness=30}]  (w4) edge [in=-50,out=-130] node [label-edge, above] {\footnotesize{$b$}} (w2);

        \path [->,every loop/.style={looseness=8}] (w2) edge [in=120,out=140,loop] node [label=left: \footnotesize{$a,b$}] {} (w2);
        \path [->,every loop/.style={looseness=8}] (w3) edge [in=50,out=70,loop] node [label=right: \footnotesize{$a,b$}] {} (w3); 
        \path [->,every loop/.style={looseness=8}] (w4) edge [in=50,out=30,loop] node [label=right: \footnotesize{$a,b$}] {} (w4);
      \end{tikzpicture}
    \end{center}

    Notice that after this misdirection by simulation, all the states considered possible by agent $b$ (the audience) become states in which she observes that the coin is in the magician's right hand. Observation became epistemic after the action: agent $b$ also believes this (wrong) fact she is observing.
    
    Finally, the execution of $\showa^+_a(l\wedge \neg r)$ (depicted below) models the following fact: the magician reveals to the audience that they have been misdirected. This corresponds to the visual action of showing that the coin is actually in the magician's left hand, as opposed to what the audience observed previously. 

    \begin{center}
    \begin{tikzpicture}[->]
        \node [state, double, label = {[label-state]left:\footnotesize$e$}] (w1) {         \begin{tabular}{l@{\,\,}l}
            $\pre=$ & $ l \wedge \neg r \wedge   \obs{a}{l}  \wedge \obs{a}{\neg r} $ \\
            \hline 
            $\post=$ & $l \wedge  \neg r \wedge  \obs{b}{l} \wedge \obs{b}{\neg r}$ \\
                     & \ \ $\wedge  \neg \obs{b}{\neg l} \wedge \neg\obs{b}{r}$
         \end{tabular}
         };
        \node [state, right = of w1, label = {[label-state]right:\footnotesize$f$}] (w2) {
        \begin{tabular}{l@{\,\,}l}
            $\pre=$ & $\neg\obs{a}{\neg l} \wedge \neg\obs{a}{r}$ \\
            \hline 
            $\post=$ & $l \wedge \neg r \wedge \obs{b}{l} \wedge \obs{b}{\neg r}$ \\
            & \ \ $\wedge \neg\obs{b}{\neg l} \wedge \neg \obs{b}{r}$
         \end{tabular}
        };
        \path (w1) edge[bend left] node [label-edge, above] {\footnotesize{$a,b$}} (w2);
        \path [->] (w2) edge [loop above] node {\footnotesize{$a,b$}} (w2);
        \path (w2) edge[bend left] node [label-edge, below] {\footnotesize$a$} (w1);
        \path [->] (w1) edge [loop above] node {\footnotesize{$a$}} (w1);
    \end{tikzpicture}
    \end{center}

    The result of executing the truthful visual action $\showa^+_a(l \wedge \neg r)$ is depicted below:
    
    
    
    \begin{center}
    \begin{tikzpicture}[->]
        \node [state, double, circle, label = {[label-state]left:\footnotesize$((w,e),e)$}] (w1) {$\begin{array}{c} l, \obs{a}{l}, \\ \obs{a}{\neg r}, \obs{b}{\neg r} \\ \obs{b}{l} \end{array}$};    
        \node [state, circle,  right  = 5em of w1, label = {[label-state]right:\footnotesize$((w,e),f)$}] (w2) {$\begin{array}{c} l, \obs{a}{l},  \\ \obs{a}{\neg r},  \obs{b}{\neg r}, \\ \obs{b}{l}\end{array}$};



         \path  [->, every arrow/.style={looseness=10}]  (w1) edge [in=140,out=160,loop] node [label=left: \footnotesize{$a$}] {} (w1);   
         \path (w1) edge [bend left]  node [label-edge, above] {\footnotesize{$a$}} (w2);
         \path (w2) edge [bend left] node [label-edge, below] {\footnotesize{$a$}} (w1);
         

        \path [->,every loop/.style={looseness=8}] (w2) edge [in=50,out=30,loop] node [label=right: \footnotesize{$a$}] {} (w2);
      \end{tikzpicture}
    \end{center}
    Again, as a result, agent $b$ epistemically observes (this time, correctly) where the coin actually is, i.e. in the magician's left hand. 
    
    \item \textbf{Posterior state:} naturally, $l$ is still true. 
    Formally, we get:
    \[
    \begin{array}{lll}
    \model,w &\models & l \wedge \obs{a}{l} \wedge \obs{a}{\neg r} \wedge \neg\obs{a}{r} \wedge \neg\obs{b}{r}  \\
    & & \wedge \ \tup{\showa^-_a(r \wedge\neg l)}(\obs{a}{l} \wedge \obs{a}{\neg r} \wedge \obs{b}{r} \wedge \obs{b}{\neg l} \\
    & & \phantom{\tup{\showa^-_a(r\wedge\neg l)}}\  \wedge \ \tup{\showa^+_a(l\wedge \neg r)}(\obs{b}{l} \wedge \obs{b}{\neg r})).
    \end{array}
    \]
\end{itemize}

Each line in the formula above corresponds to one of the moments in which an action has been performed, and to their corresponding effects. In the first line, as described in the initial state, only the magician has a correct picture of the actual situation.
Based on this initial state, as shown in the second line, the action of misdirection by simulation is executed, making $b$ (wrongly) observe that the coin is in the magician's right hand ($\obs{b}{r}$) and no longer in her left hand ($\obs{b}{\neg l}$). Magician's beliefs and observations remain unchanged, as expected. When the trick is revealed, as shown in the third line, both the magician and spectators both have an adequate picture of the coin being in the magician's left hand ($l$) and not in the magician's right hand ($\neg r$). Their factual observations are correct: in the actual state (i.e. $((w,e),e)$), we have that $\obs{a}{l}$ and $\obs{a}{\neg r}$ hold for the magician; $\obs{b}{l}$ and $\obs{b}{\neg r}$ hold for the spectators. Their beliefs are correct too: $\B_a l$ and $\B_a \neg r$ for the magician; $\B_b l$ and $\B_b \neg r$ for the spectators. In addition to that, the magician and spectators have conscious beliefs of their own observations: it is the case that $\B_a(\obs{a}{l})$ and $\B_a(\obs{a}{\neg r})$, and also that $\B_b(\obs{b}{l})$ and $\B_b (\obs{b}{\neg r})$. 

It is worth to notice that every action performed by the magician (in this case, $\showa^-_a(r\wedge\neg l)$ and $\showa^+_a(l \wedge\neg r)$) are executed only whenever their corresponding preconditions hold at their respective current situation, i.e., we use a diamond-like modality. This means that the magician \emph{intentionally} performs these actions, in particular, she intentionally misdirects the spectators with $\showa^-_a(r\wedge\neg l)$. By contrast, if we use a box-like modality (i.e., $[\showa^-_a(r\wedge\neg l)]$), the formula will also hold when the precondition fails to be true; meaning that in case of misdirection, either the magician is not actually misdirecting the spectators, or that she unintentionally misdirects them. We focus herein on intentional actions, as discussed previously.


\section{On the Generality of $\dol$}
\label{sec:other}

This section shows that $\dol$ can be used to characterize more specific and general concepts, based on combining atomic observations, belief modalities and dynamic actions. We first define a stronger notion of belief, by reinforcing the basic beliefs of $\dol$ with dynamic actions, and a stronger notion of observation, by using atomic observations to define epistemic observations. We then discuss how principles involving beliefs and observations interact, and why some of them should not be stated as axioms. Afterwards, we proceed to characterize the difference between misdirection as simulation (which involves an action) and misdirection as dissimulation (which involves the lack of an action). We finally attempt to capture the effect of surprise involved in the French Drop trick. 

\subsection{Stronger notions of Observation and Belief}
\label{sec:strongnotions}

 The way logic $\dol$ is interpreted in Definition~\ref{def:semantics}, the notion of observation, written~$\obs{}{}$, is atomic and non-epistemic. However, using a modal logic framework gives us the possibility to define a notion of epistemic observation, written $\Ob$, based on this initial concept of atomic (i.e., non-epistemic) observation $\obs{}{}$.

\begin{definition}\label{def:epistemic-observation}
We define epistemic observation, written $\Ob$, as follows:
\[ 
 \Ob_b p\ :=\ \obs{b}{p} \wedge \B_b(\obs{b}{p}).
\]
\end{definition}

According to this definition, agent $a$ \textit{epistemically observes} that $p$ when agent $b$ \textit{atomically observes} that $p$, i.e. $\obs{b}{p}$, but also believes that she is atomically observing that $p$, i.e. $\B_b(\obs{b}{p})$. That being done, we define a notion of epistemic observation which is \emph{stronger} in the sense that it associates epistemic observations to dynamic actions. 

\begin{definition}\label{def:strong-epistemic-observation}
We define strong epistemic observation, written $\ES$, as follows:
\[ 
\ES_b p\ := \ \Ob_b p \wedge \bigvee_{a\in\agt\setminus\{b\}}\tup{\tella^+_a(p)}\B_b p \vee \tup{\showa^+_a(p)}\B_b p
\]
\end{definition}

Strong epistemic observation should be understood as epistemic observations that turn out to be supported by verbal and/or visual actions at the same time. 

In contrast with our initial notion of belief~$\B$, another notion we can define is \emph{strong belief}. As it is, the modality $\B$ defines belief as being weak doxastic attitude in which the agent may lack any justification for supporting her beliefs. On the contrary, we define strong beliefs as being weak beliefs supported by genuine actions, either verbal or visual, as in the following definition.   

\begin{definition} Let $\varphi$ be a conjunction of literals, we define strong beliefs, written $\SB$, in the following way:
\[ 
\SB_b\varphi\ := \ \B_b\varphi \wedge \bigvee_{a\in\agt\setminus\{b\}}\tup{\tella^+_a(\varphi)}\B_b\varphi \vee \tup{\showa^+_a(\varphi)}\B_b\varphi
\]
\end{definition}

This definition means that agent $b$ believes that $\varphi$ in a strong sense when $b$ ‘weakly' believes that $\varphi$ and this belief is robust under any truthful action with $\varphi$, verbal ($\tup{\tella^+_a(\varphi)}$) or visual ($\tup{\showa^+_a(\varphi)}$). In other words, agent $b$ holds a strong belief that $\varphi$ because $b$'s belief is supported by the possibility of agent $a$'s action.

\subsection{Interaction between Beliefs and Observations}
\label{subsec:dox-obs-principles}

Let us discuss some principles linking observations and beliefs, that are either satisfiable, valid or unreasonable in our setting. The first principle, hereafter separated into (1a) and (1b), states that if an agent $a$ observes a certain fact as true (respectively, as false), $a$ believes her observations to be true (respectively, false): 

\begin{itemize}
    \item[](1a) $\obsp \rightarrow \B_a \obsp$
    \item[](1b) $\obsnp \rightarrow \B_a \obsnp$
\end{itemize}

Principles (1a)-(1b) are \textit{satisfiable} but not valid in our setting. As indicated before, we aim to keep a distinction between beliefs and observations since observation consists in sensory visual perception that does not necessarily involve conscious representation \textcolor{black}{(for instance, you may see an entity in the water without seeing it \textit{as} a fish in the water).} So, if we accept principles (1a) and (1b) as universally valid, observations and belief conflate into the concept of ``epistemic observation'' discussed in  Definition~\ref{def:epistemic-observation}. Naturally, these statements are satisfiable: it can be the case that an agent observes a fact $p$ (respectively, $\neg p$) and that she believes she is observing $p$ (respectively, $\neg p$).

The following principle, now divided into (2a) and (2b), states that an agent cannot have wrong beliefs about its observations:

\begin{enumerate}
    \item[](2a) $\B_a \obsp \rightarrow \obsp$
    \item[](2b) $\B_a \obsnp \rightarrow \obsnp$
\end{enumerate}

Principles (2a)-(2b) are \textit{satisfiable} since there are states in which an agent has correct beliefs about her observations. That being said, they are not valid in our setting since, otherwise, an agent would never be misdirected, contrary to our purposes. To conclude, consider the following principle, separated into (3a) and (3b):

\begin{enumerate}
    \item[](3a) $(\obsp\wedge \B_a\obsp) \rightarrow [\showa_a^+(p)] (\obs{b}{p}\wedge \B_b \obs{b}{p})$ 
    \item[](3b) $(\obsnp \wedge \B_a\obsnp) \rightarrow [\showa_a^-(p)](\obs{b}{p}\wedge \B_b \obs{b}{p})$
\end{enumerate}

Principles (3a)-(3b) are now \textit{valid} in our framework. Instead of relating beliefs and observations statically, they make use of the dynamic actions in order to guarantee that an observation is believable. Formula (3a) states that whenever agent $a$ epistemically observes that $p$, then, if $a$ truthfully shows that $p$ to some addressee $b$, then $b$ will also come to epistemically observe that $p$ too. Formula (3b) states the same success but with simulation that $p$: whenever agent $a$ epistemically observes that $\neg p$, then, if $a$ now untruthfully shows that $p$ to some addressee $b$, then $b$ will come to epistemically observe that $p$. These principles reflect 
properly the spirit of the $\dol$ framework.


\subsection{Other Variations on Misdirection}















\subsubsection{Misdirection as Simulation}

Having defined agent $a$'s verbal simulation that $\neg p$ with $\tella^-_a(\neg p)$ and agent $a$'s visual simulation that $\neg p$ with $\showa^-_a(\neg p)$, we can define agent $a$'s general simulation that~$\neg p$ by the disjunction of both verbal and visual types, as follows. 

\begin{definition}\label{def:sim} We define simulation of the false, written $\Sim$ for simulation, by the following disjunction:
\begin{align*}
\Sim_a \neg p\ :=\ &  (\B_a p \wedge  \tup{\tella^-_a(\neg p)}\B_b \neg p)  \vee (\obs{a}{p} \wedge \tup{\showa^-_a(\neg p)}\obs{b}{\neg p})
\end{align*}
where $p\in\prop$, and $\Sim_a \neg p$ expresses that \emph{``agent $a$ simulates that $\neg p$ is the case to agent $b$''}.
\end{definition}

Definition~\ref{def:sim} expresses simulation of a factual information $\neg p$ as being either verbal simulation of $\neg p$ (left-hand disjunct with $\tella^-$) or visual simulation of $\neg p$ (right-hand disjunct with $\showa^-$). Since the formula $\Sim_a \neg p$ is a disjunction of cases, it is sufficient that either verbal simulation obtain or that visual simulation obtain for $\Sim_a \neg p$ itself to obtain. In the left-hand disjunct of $\Sim_a \neg p$, verbal simulation is expressed by the conjunction $\B_a p \wedge \tup{\tella^-_a(\neg p)}\B_b \neg p$. This means that agent $a$ verbally simulates that~$\neg p$ to agent $b$ when $a$ believes that $p$ but $a$ tells $b$ that~$\neg p$ such that, as a result of the verbal action $\tella^-_a(\neg p)$, agent $b$ believes that $\neg p$. In the right-hand disjunct of $\Sim_a \neg p$, now, visual simulation is defined by the conjunction $\obs{a}{p} \wedge \tup{\showa^-_a(\neg p)}\obs{b}{\neg p}$. This means that agent $a$ visually simulates that $\neg p$ to agent $b$ when $a$ visually observes that $p$ and, as a result of the visual action $\showa^-_a(\neg p)$, agent $b$ observes that~$\neg p$. In both verbal and visual simulation of $\neg p$, the aim of $a$ is that $b$ holds a wrong visual representation that $\neg p$ through the action performed. Moreover, we use again diamond-like dynamic modalities since we aim to model the actions performed by the magician are intentional.

\subsubsection{Misdirection as Dissimulation}
\label{subsubsec:derived}


So far, we have used $\dol$ to express the notion of simulation, or faking, in the verbal case ($\tella^-$) and in the visual case ($\showa^-$). The operator, e.g. the magician, uses a verbal or visual action to direct the addressees' attention, e.g. the spectators', toward a false representation of reality. As noticed in Section~\ref{sec:concepts}, however, this kind of misdirection based on an effective action differs from any other kind of misdirection, known as \textit{``dissimulation''}, or hiding, in which no action takes place in order to deceive. In this case, deception \emph{does not} result from any action, but from the failure of a verbal or visual action. Quite fortunately, $\dol$ is expressive enough to also capture such a form of information dissimulation consisting in withholding information.

\begin{definition}\label{def:omit} We define dissimulation of information, written $\Dis$, by the following disjunction:
\begin{align*}
\Dis_a p\ :=\ &  (\B_a p \wedge \B_b\neg p \wedge \neg\tup{\tella^+_a(p)}\B_b \neg p)   \vee (\obs{a}{p} \wedge \obs{b}{\neg p} \wedge \neg\tup{\showa^+_a(p)}\obs{b}{\neg p})
\end{align*}
where $p\in\prop$, and $\Dis_a p$ expresses that \emph{``agent $a$ dissimulates that $p$ is the case to agent $b$''}.
\end{definition}

Definition~\ref{def:omit} expresses dissimulation of a factual information $p$ as being either verbal dissimulation of $p$ (left-hand disjunct referring to $\tella^+$) or visual dissimulation of $p$ (right-hand disjunct referring to $\showa^+$). Since the formula $\Dis_a p$ is a disjunction of cases, it is sufficient that either verbal dissimulation obtain or that visual dissimulation obtain for $\Dis_a p$ itself to obtain. In the left-hand disjunct of $\Dis_a p$, verbal dissimulation is expressed by the conjunction $\B_a p \wedge \B_b\neg p \wedge \neg\tup{\tella^+_a(p)}\B_b \neg p$. Here, this means that agent $a$ verbally hides that $p$ to agent $b$ when $a$ believes that $p$ and $b$ believes that $\neg p$. But crucially, contrary to Definition \ref{def:sim} above, now \textit{no} \textit{verbal action} $\tella^+_a(p)$ has participated to make agent $b$ actually believe that $\neg p$, i.e. not believe that $p$. In other words, agent $b$ believes that $\neg p$ \textit{by the lack of an action}. In the right-hand disjunct of $\Dis_a p$, now, visual dissimulation is defined by the conjunction $\obs{a}{p} \wedge \obs{b}{\neg p} \wedge  \neg\tup{\showa^+_a(p)}\obs{b}{\neg p}$. This means that agent $a$ visually hides that $p$ to agent $b$ when $a$ visually observes that $p$, $b$ observes that $\neg p$ and here again, crucially, \textit{no} \textit{visual action} $\showa^+_a(p)$ has participated to make agent $b$ observe $\neg p$, i.e., not observe $p$. In both verbal and visual dissimulation of $p$, the aim of $a$ is that $b$ hold the wrong representation that $\neg p$, not as the result of any action but, specifically, by the lack of an action.

%

Recall that in the French Drop trick, dissimulation is not omission of information, or withholding information, but hiding information. The magician performs some visual action, i.e. some hand gesture, to make the audience wrongly observe that the coin moved to her right hand. In our own variant of the trick, however, the magician finally reveals her secret at the end of the trick: she opens her left hand in front of the public. But if the magician had not revealed her secret, this would have counted as a case of dissimulation in the sense of withholding, or omitting, information. In doing so, more precisely in doing nothing to correct the spectators' mistaken observations, the magician would have omitted visual information from them. The strength of logic $\dol$ is to be able to tease apart dissimulation via hiding from dissimulation via omission.

\subsubsection{Simulation with Dissimulation}
\label{subsubsec:imp}

Simulation of the false is a notion conceptually distinct from dissimulation of the truth (see subsection \ref{subsec:twomain}). We have used $\dol$ to make this difference explicit by distinguishing simulation ($\Sim$) from dissimulation ($\Dis$). But though simulation and dissimulation are conceptually distinct, they are also logically related.   

Intuitively, an agent who makes it appear that $\neg p$ (simulation) thereby also conceals the true state of affairs that $p$ (dissimulation). Accordingly, in logic $\dol$, if agent $a$ simulates the falsity of $p$, either verbally by performing the action $\tella^-_a(\neg p)$ or visually by performing the action $\showa^-_a(\neg p)$, agent $a$ also comes to dissimulate the truth of $p$ at the same time. This is the case in the French Drop trick when the magician performs the visual action of simulating that $(\neg r \wedge l)$ (i.e. that the coin is not in the magician's right hand but in the magician's left hand): $\showa^-_a(r\wedge\neg l)$. Naturally, at the same time, the magician comes to not show, i.e. to dissimulate, the truth of $(\neg r \wedge l)$: no execution of $\showa^+_a(\neg r \wedge l)$ reveals such value (see subsection~\ref{subsec:example}).

\begin{proposition}\label{prop:simulation}
 Let $p\in\prop$ and let $\model,w$ be a pointed model, then:
\begin{enumerate}
\item \textbf{Verbal case:} $\model,w \models \tup{\tella^-_a(\neg p)}\B_b \neg p \rightarrow \neg \tup{\tella^+_a(p)}(\Bdiam_b \top \wedge \B_b \neg p)$;
\item \textbf{Visual case:} $\model,w \models \tup{\showa^-_a(\neg p)}\obs{b}{\neg p}  \rightarrow \neg \tup{\showa^+_a(p)}\obs{b}{\neg p}$.
\end{enumerate}
\end{proposition}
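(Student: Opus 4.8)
The plan is to notice that in both items the consequent is in fact \emph{valid}; since $\models\psi$ implies $\models\varphi\to\psi$ for any $\varphi$, it suffices to establish the two validities $\models\neg\tup{\tella^+_a(p)}(\Bdiam_b\top\wedge\B_b\neg p)$ and $\models\neg\tup{\showa^+_a(p)}\obs{b}{\neg p}$, which also proves slightly more than the stated implications. Both are obtained by unfolding Definitions~\ref{def:verbal-types}, \ref{def:visual-types} and~\ref{def:product} together with the semantic clause for $\tup{\amodel,e}$ from Remark~\ref{rem:dualmodality}.

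I would dispatch the visual case first, as it is purely a matter of postconditions. Writing $\showa^+_a(p)=(\amodel,e)$ (here $\psi=p$), Definition~\ref{def:visual-types} assigns $\post(g)(\obs{b}{\neg p})=\bot$ for every action $g$ of $\amodel$ and every $b\in\agt\setminus\{a\}$; hence by Definition~\ref{def:product} we get $V'(\obs{b}{\neg p})=\emptyset$ in $\model\otimes\amodel$, so $\obs{b}{\neg p}$ fails at \emph{every} world of the updated model. By Remark~\ref{rem:dualmodality}, $\model,w\models\tup{\showa^+_a(p)}\obs{b}{\neg p}$ would require $(\model\otimes\amodel),(w,e)\models\obs{b}{\neg p}$ (and in particular $(w,e)$ to be a world of the product), which is impossible; so this diamond formula holds at no pointed model, which is the desired validity.

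For the verbal case the postcondition of $\tella^+_a(p)$ is the identity, so I would argue at the level of accessibilities. Assume $\model,w\models\tup{\tella^+_a(p)}(\Bdiam_b\top\wedge\B_b\neg p)$ with $\tella^+_a(p)=(\amodel,e)$: then $\model,w\models\pre(e)=\B_a p$ and $(\model\otimes\amodel),(w,e)\models\Bdiam_b\top\wedge\B_b\neg p$. Since $\ra_b=\{(e,f),(f,f)\}$, every $R'_b$-successor of $(w,e)$ is of the form $(v,f)$ with $wR_bv$ and $\model,v\models\pre(f)=p\wedge\B_a p$; the conjunct $\Bdiam_b\top$ provides at least one such successor. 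The identity postcondition gives $(\model\otimes\amodel),(v,f)\models p$, while $\B_b\neg p$ at $(w,e)$ forces $(\model\otimes\amodel),(v,f)\models\neg p$ --- a contradiction. Hence $\tup{\tella^+_a(p)}(\Bdiam_b\top\wedge\B_b\neg p)$ is unsatisfiable.

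The step I expect to require the most care is the role of $\Bdiam_b\top$ in item~1: because product update need not preserve seriality (Lemma~\ref{lemma:preservation-updates}, item~1), the action $\tella^+_a(p)$ may delete every $R_b$-successor of the current world, and then $\B_b\neg p$ would hold vacuously and the validity would fail; requiring $\Bdiam_b\top$ is exactly what excludes this degenerate case and lets the contradiction close. The visual case needs no such guard, since the postcondition of $\showa^+_a(p)$ falsifies $\obs{b}{\neg p}$ outright, regardless of the accessibility structure. I would also add a one-line remark that the antecedents are satisfiable --- e.g. $\tup{\tella^-_a(\neg p)}\B_b\neg p$ reduces to $\B_a\neg\neg p$, since the postcondition is the identity and $\pre(f)=\neg p\wedge\B_a\neg p$ makes the residual $\B_b\neg p$ clause trivially true --- so that the proposition is not vacuously true.
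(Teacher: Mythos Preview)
Your argument is correct and in its core coincides with the paper's: in both cases the contradiction is closed by observing that after $\tella^+_a(p)$ every $R'_b$-successor of $(w,e)$ is of the form $(v,f)$ with $\model,v\models p$, so $\Bdiam_b\top\wedge\B_b\neg p$ cannot hold; and that after $\showa^+_a(p)$ the postcondition kills $\obs{b}{\neg p}$ outright.

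The genuine difference is in the overall shape. The paper starts from the antecedent $\tup{\tella^-_a(\neg p)}\B_b\neg p$, extracts $\model,w\models\B_a p$ from its precondition, and then \emph{reuses} this to guarantee that $\pre(e)=\B_a p$ of $\tella^+_a(p)$ holds (``(a') holds as a consequence of (a)''). You instead observe that the consequent $\neg\tup{\tella^+_a(p)}(\Bdiam_b\top\wedge\B_b\neg p)$ is valid on its own: if the diamond were true, $\B_a p$ would already follow from the semantics of $\tup{\amodel,e}$, so nothing from the antecedent is needed. This is strictly stronger and cleaner; it also makes transparent why the implication is not an interaction result between $\tella^-_a$ and $\tella^+_a$ but simply a fact about $\tella^+_a(p)$ (respectively $\showa^+_a(p)$). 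Your added remark that the antecedent is satisfiable (it reduces to $\B_a\neg\neg p$) is a welcome addition the paper omits, and your diagnosis of the role of $\Bdiam_b\top$ via the failure of seriality preservation matches exactly what the paper does.
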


\begin{proof}
1) For the verbal case, let us suppose that $\model,w \models \tup{\tella^-_a(\neg p)}\B_b \neg p$. By the definition of $\models$, we have that: (a) $\model,w \models\B_a p$, and that (b) $\model',(w,e)\models \B_b\neg p$, where $\model'= \model \otimes \tella^-_a(\neg p)$.     
Then, for agent $b$ (the spectator) the only possible world after the update, is one in which the action $f$ from $\tella^-_a(\neg p)$ was executed, i.e., a world that initially satisfied formula  ($\neg p\wedge\B_a\neg p$). Thus, for every possible successor of $(w,e)$ via agent~$b$, formula $\neg p$ holds (notice that since $p$ is atomic, we can be sure there is no Moorean phenomena there\footnote{For a more extensive discussion about the Moorean phenomena, see e.g.~\cite{HollidayI10}.}). To verify that $\model,w\models\neg \tup{\tella^+_a( p)}(\Bdiam_b \top \wedge \B_b \neg  p)$, we aim for a contradiction, thus we assume  $\model,w\models \tup{\tella^+_a( p)}(\Bdiam_b \top \wedge \B_b \neg  p)$.
By $\models$, we have that: (a')  $\model,w\models \B_a p$, and that (b') $\model'',(w,e)\models (\Bdiam_b \top \wedge \B_b \neg  p)$, where $\model''= \model \otimes \tella^+_a(p)$.  
Naturally, (a') holds as a consequence of (a), thus guaranteeing the executability of the actions.

Let us analyze (b'). If $(w,e)$ has no successors via agent $b$ in $\model''$, it would contradict $\Bdiam_b\top$.
Then, since we force $(w,e)$ to have a successor via agent $b$, in such a case, $\B_b\neg p$ being true at $(w,e)$ would imply that each of its successors satisfy $\neg p$. But again, the only possible successor via agent $b$ would be one in which $f$ satisfies $p$ (by the definition of $\tella^+_a(p)$), thus we have a contradiction. Therefore,  $\model,w \models \tup{\tella^-_a(\neg p)}\B_b \neg p \rightarrow \neg \tup{\tella^+_a(p)}(\Bdiam_b \top \wedge \B_b \neg p)$. 

2) The reasoning is similar for the visual case, except that notice that $\obs{b}{\neg p}$ holds after executing $\showa^-(\neg p)$ in the current point of evaluation and does not hold after executing $\showa^+(p)$, because of the postconditions of such action types.
\end{proof}

Notice that in the validities above, all informational actions are done at a propositional level (i.e., not complex formulas). This can be seen as a limitation of the established property, but in fact it can be generalized to conjunctions of literals. Actually, we use this approach to model the French Drop trick in subsection \ref{subsec:example}. 


Then, we obtain a result we expect, based on Definitions \ref{def:sim} and \ref{def:omit}:

\begin{corollary}\label{coro:sim-dis}
For all $p\in\prop$, and for all $\model,w$, the following statement holds: 
\[
    \model,w \models (\Sim_a \neg p \wedge (\B_b\neg p \wedge \obs{b}{\neg p})) \rightarrow \Dis_a p.
\]
\end{corollary}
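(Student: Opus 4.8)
The plan is to unfold the definitions of $\Sim_a \neg p$ (Definition~\ref{def:sim}) and $\Dis_a p$ (Definition~\ref{def:omit}) and proceed by cases on which disjunct of $\Sim_a \neg p$ holds, using the hypothesis $\B_b\neg p \wedge \obs{b}{\neg p}$ together with Proposition~\ref{prop:simulation} to discharge the negated-action conjuncts appearing in $\Dis_a p$. Concretely, suppose $\model,w \models \Sim_a \neg p \wedge \B_b\neg p \wedge \obs{b}{\neg p}$. I want to show $\model,w \models \Dis_a p$, i.e. one of the two disjuncts of $\Dis_a p$ holds. I split according to $\Sim_a \neg p$.

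\emph{Verbal case.} Suppose $\model,w \models \B_a p \wedge \tup{\tella^-_a(\neg p)}\B_b \neg p$. I aim to establish the left disjunct of $\Dis_a p$, namely $\B_a p \wedge \B_b\neg p \wedge \neg\tup{\tella^+_a(p)}\B_b \neg p$. The first conjunct $\B_a p$ is immediate, and $\B_b\neg p$ is part of the global hypothesis. For the third conjunct, Proposition~\ref{prop:simulation}(1) gives $\model,w \models \neg \tup{\tella^+_a(p)}(\Bdiam_b \top \wedge \B_b \neg p)$; I then need to upgrade this to $\neg\tup{\tella^+_a(p)}\B_b\neg p$. This is where I expect the one genuine subtlety to lie: the two formulas are not literally the same, because $\tup{\tella^+_a(p)}\B_b\neg p$ could in principle hold via a post-update state at which agent $b$ has no successors, trivializing $\B_b\neg p$. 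I would resolve this by arguing that after executing $\tella^+_a(p)$ at $w$, agent $b$ does retain a successor — indeed the precondition $\B_a p$ (hence $\pre(e)=\B_a p$) holds, and the action $f$ with $\pre(f)=p\wedge\B_a p$ is accessible for $b$ from $e$; seriality of the original $R_b$ at $w$ (which follows since $\model,w$ is an observational epistemic model, hence serial) plus the fact that $\B_a p$ propagates along $R_b$ would need checking, but more directly: $\B_b\neg p$ holding at $(w,e)$ in the updated model together with the argument already made inside the proof of Proposition~\ref{prop:simulation} (the only $b$-successor is a copy of an $f$-state satisfying $p$) yields the contradiction directly, so $\neg\tup{\tella^+_a(p)}\B_b\neg p$ holds. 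Alternatively, and more cleanly, I can simply observe that $\Bdiam_b\top$ is itself a consequence of whatever successor structure makes $\tup{\tella^+_a(p)}\B_b\neg p$ nontrivial, and handle the degenerate no-successor case separately (there $\B_b\neg p$ holds vacuously but so does $\B_b p$, and one re-runs the incompatibility argument). I would pick whichever of these is shortest in the final writeup.

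\emph{Visual case.} Suppose $\model,w \models \obs{a}{p} \wedge \tup{\showa^-_a(\neg p)}\obs{b}{\neg p}$. I aim for the right disjunct of $\Dis_a p$: $\obs{a}{p} \wedge \obs{b}{\neg p} \wedge \neg\tup{\showa^+_a(p)}\obs{b}{\neg p}$. Again the first conjunct is immediate and the second is the global hypothesis. For the third, Proposition~\ref{prop:simulation}(2) directly gives $\model,w \models \neg\tup{\showa^+_a(p)}\obs{b}{\neg p}$ under the assumption $\model,w\models\tup{\showa^-_a(\neg p)}\obs{b}{\neg p}$, with no Moorean-type complication since $\obs{b}{\neg p}$ is atomic and the conclusion is governed purely by the postconditions of $\showa^+_a(p)$ (which set $\obs{b}{\neg p}$ to $\bot$). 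So this case is essentially a one-line citation of Proposition~\ref{prop:simulation}(2). Combining the two cases: in either branch of $\Sim_a\neg p$ the corresponding disjunct of $\Dis_a p$ is forced, so $\model,w\models\Dis_a p$, which completes the argument. The main obstacle, as flagged, is purely the verbal case's mismatch between the $\Bdiam_b\top \wedge \B_b\neg p$ conclusion of Proposition~\ref{prop:simulation}(1) and the bare $\B_b\neg p$ needed here — a seriality/triviality bookkeeping issue rather than a conceptual one.
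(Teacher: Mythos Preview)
Your plan is correct and is exactly the derivation the paper has in mind: the corollary is stated immediately after Proposition~\ref{prop:simulation} with no separate proof, so the intended argument is precisely your case split on the two disjuncts of $\Sim_a\neg p$, using the hypotheses $\B_b\neg p$ and $\obs{b}{\neg p}$ to supply the middle conjuncts of $\Dis_a p$ and invoking Proposition~\ref{prop:simulation}(1)--(2) for the negated-action conjuncts.

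You are also right to flag the $\Bdiam_b\top$ mismatch in the verbal case as the one genuine wrinkle: Proposition~\ref{prop:simulation}(1) only yields $\neg\tup{\tella^+_a(p)}(\Bdiam_b\top \wedge \B_b\neg p)$, whereas $\Dis_a p$ asks for $\neg\tup{\tella^+_a(p)}\B_b\neg p$, and these differ exactly when the update leaves $(w,e)$ without $b$-successors. The paper does not comment on this, so your bookkeeping argument (re-running the contradiction from the proof of Proposition~\ref{prop:simulation} directly, or handling the dead-end case separately) is the right way to close it; note that the paper explicitly acknowledges seriality can fail after updates (Lemma~\ref{lemma:preservation-updates}), so you should not silently assume $\Bdiam_b\top$ survives. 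The visual case is, as you say, a one-line citation.
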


Corollary \ref{coro:sim-dis} states that simulation of $\neg p$ implies dissimulation of $p$, but only under the assumption that agent $b$ both believes and observes $\neg p$ (this ensures that simulation of $\neg p$ to agent $b$ actually succeeds since $b$ also believes and observes $\neg p$). More precisely, it states that when agent $a$ simulates that $\neg p$ \emph{successfully} in the sense that agent $b$ believes and observes that $\neg p$, this implies that agent $a$ also comes to dissimulate that $p$ at the same time. This implication can be understood as follows in the French Drop trick: if the magician succeeds in making the spectators wrongly believe and observe that the coin has moved from her left to right hand, then the magician has succeeded in dissimulating that the coin has remained in her left hand all along. In other words, simulating successfully that the coin has moved from left to right hand implies to dissimulate that the coin in fact remained in the magician left hand. Notice that this holds only in the case that the spectators believed and observed $\neg p$ before, otherwise dissimulation naturally may not succeed. 

The situation just described comes from the fact that we consider that the magician always succeeds in the French Drop trick (under the appropriate circumstances), so the logical behaviour is actually consistent with this reasonable assumption. Weaker actions and connections between simulation and dissimulation are worth to be explored in the future, e.g., if we consider a clumsy magician that fails in some of her actions.

%







\subsection{Magic Misdirection as a Source of Surprise}
\label{ssec:surprise}

We have seen that logic $\dol$ is strong enough to express various notions characterizing misdirection in general and in magic tricks, more specifically. But in fact, $\dol$ is also strong enough to express the sense of surprise raised by magic effects on spectators.

In the French Drop trick, it holds that $\tup{\showa^-_a(r \wedge \neg l)}(\obs{a}{l} \wedge \obs{a}{\neg r} \wedge \obs{b}{\neg l}\wedge\obs{b}{r})$, at an intermediary step. So, after a visual action of simulation $\showa^-_a(r \wedge \neg l)$, spectators observe that $\neg l$ and $r$:  $\obs{b}{\neg l}\wedge\obs{b}{r}$. Immediately after this effect, the magician unveils the mystery of the trick with visual action, i.e., it holds that $\tup{\showa^+_a(l\wedge\neg r)}(\obs{b}{l} \wedge \obs{b}{\neg r})$. In words, after the execution of $\showa^+_a(l\wedge\neg r)$, the spectators observe the true facts $l$ and $\neg r$: $\obs{b}{l}\wedge\obs{b}{\neg r}$. As one can see, those facts are completely opposite to the facts they observed just before. 

Following~\citep{Ortony&Partridge1987,Meyer&al1997,Lorini&Castelfranchi2006}, an agent is surprised by an incoming event when she recognizes an inconsistency, that is to say a \textit{discrepancy} or \textit{mismatch}, between her expectations about the world and the actual state she observes. Quite intuitively, the wider the gap between the agent's expectations and reality, the stronger the surprise. In that respect, \citep{lorini2007cognitive} distinguishes a first-order kind of surprise called ``\textit{mismatch-based surprise}'' from a second-order kind of surprise called ``\textit{astonishment}'', or ``\textit{surprise in recognition}''.

Mismatch-based surprise results from a conflict between the agent's scrutinized representation and perceived facts or events. The agent is surprised because she has some anticipatory representation, but she cannot make incoming events fit with this representation. The strength of the surprise effect depends on the agents' expectations, more precisely \textit{beliefs} about future events. Surprise occurs if an unforeseen event, in the sense of \textit{not believed} to happen, ultimately occurs. But a stronger surprise occurs if an event that was \textit{believed not} to happen ultimately happens. In the $\dol$ framework, for a given fact $p$, this can be expressed by the difference between agent $a$ not believing that $p$ when $p$ happens, i.e. $\neg \B_a p \wedge p$, versus agent $a$ believing that $\neg p$ when $p$ happens: $\B_a \neg p \wedge p$. Since the agent's disbelief that $p$ is stronger in the second case, surprise will also be stronger.   
  
Astonishment, or surprise in recognition, results from the \textit{recognition} of the absolute implausibility of a perceived fact compared to expectations. In this case, surprise is rooted in incongruity: an event not conceived as a relevant possibility actually takes place and leaves agents completely astonished. In $\dol$, agent $a$ will be astonished that a fact $p$ is true when $p$ is true but agent $a$ actually neither believes it to be true nor believes it to be false: $p \wedge \neg \B_a p \wedge \neg \B_a\neg p$. In other words, astonishment results from the fact that when $p$ occurs, $p$ is an epistemic blindspot. 

%

In the French Drop trick, the surprise the public experiences is of the mismatch-based kind. When action $\showa^-_a(r \wedge \neg l)$ leads spectators to atomic (i.e., non-epistemic) observations $\obs{b}{\neg l}$ and $\obs{b}{r}$, it also leads them to epistemically observe the same facts. The reason is that $\showa^-_a(r \wedge\neg l)$ results in $\B_a (\obs{b}{\neg l})$ and $\B_a (\obs{b}{r})$, so by definition of epistemic observation (see subsection \ref{sec:strongnotions}), we have $\Ob_b(\neg l)$ and $\Ob_b(r)$. At the end of the trick, the magician dispels her secret with visual action $\showa^+_a(l\wedge\neg r)$ now leading the spectators to atomically observe $l$ and $\neg r$: $\obs{b}{l}$ and $\obs{b}{\neg r}$. Again, spectators come to epistemically observe the same facts since action $\showa^+_a(l \wedge \neg r)$ results in $\B_a (\obs{b}{l})$ and $\B_a (\obs{b}{\neg r})$, so by definition of epistemic observation, we have: $\Ob_b(l)$ and $\Ob_b(\neg r)$. At the end of the trick, the spectators' epistemic observations $\Ob_b(l)$ and $\Ob_b(\neg r)$ are completely opposite to their epistemic observations $\Ob_b(\neg l)$ and $\Ob_b(r)$ one moment earlier, resulting in mismatch-based surprise.   

Notice that, according to \citep{lorini2007cognitive}, astonishment would have replaced mismatch-based surprise if, when revealing the trick to the audience, the magician had shown, for instance, a banknote instead of a coin, like a rabbit straight out of her hat.

\section{Conclusion}
\label{sec:final}

This article aims to conduct a logical analysis of misdirection, guided by the conceptual distinction between \textit{non-epistemic} and \textit{epistemic seeing} \cite{Dretske1970,Dretske1979}, as well as the classical contrast between \textit{simulation} and \textit{dissimulation} \cite{Bell&Whaley1991}. We adapt extant work in dynamic epistemic logic~\cite{BaltagMS98,DELbook} and in logics of attention and observation~\cite{BB23,BolanderDHLPS16} to present a dynamic logic of misdirection, denoted $\dol$. In this way we provide a novel application of (a variant of) Dynamic Epistemic Logic in the formal analysis of both verbal and visual misdirection. The static part of the logic relies on atomic formulas for representing the agents' visual observation of their environments, and on classical doxastic modality for representing their beliefs. For the dynamic part, we focus on a fragment named $\tslogic$, in which action types $\tella$ and $\showa$ are defined for expressing verbal actions of announcing information, as well as visual actions of showing information. Those actions can be either truthful (indicated by label~$^+$) or untruthful (indicated by label~$^-$), making it possible to represent actions of verbal and visual misdirection. We used $\dol$ to model a magic trick called the ``French Drop'' which illustrates actions of visual simulation (\textit{fake passing}) and of visual dissimulation (\textit{genuine palming}). The framework proved to be powerful enough to express the various notions involved in misdirection but also more complex notions, such as stronger belief attitudes, epistemic observations and the logical interactions between simulation and dissimulation. Interestingly, the setting also succeeded at capturing the specific sense of surprise resulting from magical effects.


We argue that our framework sets the basis for a broader formal theory of misdirection. Epistemic and modal logics in general are suitable to model other forms of misdirection, such as withholding information or conspiration between a set of agents. For instance, we can represent collective acts of misdirection (i.e., a magician having accomplices) by generalizing informational actions with respect to more than one agent, by using e.g. \emph{private announcements}.
On the other hand, it would be interesting to investigate different conditions on the modality of observation. In this work, we only ask for the underlying relation to be serial, in correspondence with the $\axm{D}$ axiom, while other constraints might be imposed (see e.g.~\cite{bonnay&egre2007,bonnay&egre2009} concerning non-transitive knowledge used for modeling perceptual indistinguishability). 
Moreover, the classes of models with serial relations are not closed under dynamic updates, which can also stand as an issue. Thus, we would like to explore other alternatives in the future, as well as preservation properties on the classes of models.

Another interesting notion we should integrate is the \emph{intentional} aspect of misdirection, i.e., by formalizing which kinds of intention are involved in deception. Here we assumed misdirection to be always intentional and do not investigate further on this aspect. But recent literature covers the intentional dimension from different perspectives, for instance the approach in~\cite{Sakama2021} uses Epistemic Causal Logic to that end. It would be interesting to investigate whether such a setting could accommodate our dynamic logic analysis. Other approaches exist, like~\cite{BonnetLLS2021} (using  STIT logic-based framework) and~\cite{WardTB23} (using causal games) formalizing intentional aspects of misdirection, that we should also take into account for comparison.

Finally, our approach based on dynamic epistemic logic allows us to use tools  from \emph{epistemic planning}~\cite{BolanderCPS20}. As a framework combining standard AI planning with dynamic epistemic logic, epistemic planning could help model strategies of misdirection more accurately. 
Particularly interesting is the  possibility of dealing with agents' (partial) observations. 
It would be interesting to study the aforementioned task, but where the initial state is an observational epistemic model, and where the action models and the goal(s) are defined over $\dol$, or more specifically from $\tslogic$. By doing so, we would be able to determine, given some state and a set of actions, if there is an observability-based plan, which leads from the initial state to a (potentially bogus) goal. Another logical tool for reasoning about strategies of misdirection is the family of logics of \emph{knowing how}~\cite{Wang15lori,Wang2016}, specially those connected to epistemic planning as~\cite{Liu&Wang2013}. Therein, an agent \emph{knows how to achieve a certain goal} if there exists a sequence of epistemic actions, i.e. an epistemic plan, that transforms the actual model into a model satisfying such a goal. Again, it would be possible to incorporate the action types of $\dol$, or variations of it, into this framework. All these lines of investigation have been briefly discussed, and deserve further exploration in the future.


\paragraph*{Acknowledgments.} The authors thank two anonymous reviewers for their helpful comments and suggestions, as well as Deborah Marber, Paul Égré, and François Olivier for their feedback on earlier versions of this paper. BI acknowledges the ANRs HYBRINFOX (ANR-21-ASIA-0003) and TRUSTEDNEWS
(ANR-25-ASM2-0003), and the programs ECOS-SUD (“Logical consequence and many-valued models”, no. A22H01) and THEMIS (grant agreements n°DOS022279400 and n°DOS022279500). RF is partially supported by projects ANPCyT-PICT-2021-GRF-TI-00400, Stic-AmSud 23-STIC-07 ‘DL(R)’, SecytUNC, the EU Grant Agreement 101008233 (MISSION), and the IRP SINFIN.

\bibliographystyle{plain}
\bibliography{bib}

\end{document}